\documentclass[12pt]{article}

\usepackage[margin=1in]{geometry}
\usepackage{amsmath, amssymb, amsthm, mathtools}
\usepackage{bm}                   
\usepackage{graphicx}
\usepackage{booktabs}
\usepackage{hyperref}
\usepackage[round,authoryear]{natbib}
\usepackage{microtype}
\usepackage{xcolor}
\usepackage{setspace}
\usepackage{algorithm}
\usepackage{algpseudocode}
\usepackage{tikz}
\usetikzlibrary{arrows.meta, calc, decorations.pathreplacing,
                decorations.markings}
\usepackage{subcaption}

\newtheorem{definition}{Definition}
\newtheorem{proposition}{Proposition}
\newtheorem{remark}{Remark}

\newcommand{\R}{\mathbb{R}}
\newcommand{\E}{\mathbb{E}}

\newcommand{\diag}{\mathrm{diag}}

\newcommand{\norm}[1]{\left\|#1\right\|}

\newcommand{\Bern}{\mathrm{Bernoulli}}
\newcommand{\Normal}{\mathcal{N}}
\newcommand{\GFF}{G_{FF}}
\newcommand{\GBF}{G_{BF}}
\newcommand{\GBB}{G_{BB}}
\newcommand{\Bs}{\mathcal{B}}
\newcommand{\Fs}{\mathcal{F}}
\newcommand{\Tot}{\mathcal{E}}
\newcommand{\Aconn}{A}

\title{\textbf{A Flat Connection:\\
The Pooling Factor and the Geometry of Centring in Hierarchical MCMC}%
\thanks{Code and package available at \url{https://github.com/ABindoff/fibr}.}}

\author{Aidan D.\ Bindoff \\
  University of Tasmania \\
  \texttt{aidan.bindoff@utas.edu.au}}

\date{}

\begin{document}

\maketitle


\begin{abstract}
Standard MCMC diagnostics ($\hat{R}$, effective sample size, divergence
counts) detect whether a chain has mixed, but not why it has not.
We ask whether the centring/non-centring obstruction in hierarchical models has
a \emph{geometric} cause beyond the metric.
The joint parameter space is a fiber bundle (hyperparameters the base,
group-level parameters the fibers), and the Fisher information metric induces an
Ehresmann connection $A = -\GFF^{-1}\GBF$; the natural hypothesis is that the
obstruction is its curvature, felt by the sampler as holonomy.
We prove this false.
The connection is \emph{flat} for \emph{any} smooth hierarchical posterior,
not only the Gaussian or conjugate case, because its horizontal leaves are
the level sets of the fiber score $\partial_{\bm\alpha}\log p$: there is no
geometric obstruction above the metric.
What remains is statistical, not geometric, and the flat connection identifies it
as a single quantity: the conditional dependence of fiber on base, governed per
group by the \emph{prior fraction} $\pi_j$, the classical pooling factor. From it
the framework recovers the established picture, that the prior-dominated groups
are the ones that mix slowly and that the optimal per-group non-centring weight
follows in closed form, and a simulation study then separates this base--fiber
coupling from the funnel, a distinct base-space pathology, by their opposite
dependence on the hierarchical variance.
A direct attribution test confirms that NUTS does not transport the fiber: the
chain-level footprint is excess conditional autocorrelation in prior-dominated
groups, exactly as $\pi_j$ predicts.
Genuine, even rotational, curvature \emph{does} appear --- but only for
connections built from a sampler's \emph{working} metric (a fixed mass matrix),
where holonomy re-enters as an algorithmic rather than geometric phenomenon.
The prior-fraction diagnostic is distributed as the open-source R package
\texttt{fibr}, with the geometric methods as accompanying reproduction code.
\end{abstract}

\textbf{Keywords:} Markov chain Monte Carlo, hierarchical models,
fiber bundles, Ehresmann connection, flat connection, holonomy, prior fraction,
pooling factor, centering, convergence diagnostics.

\newpage
\tableofcontents
\newpage

\section{Introduction}
\label{sec:intro}

A central challenge in Bayesian computation is the hierarchical model.
The joint parameter space of a two-level model $-$ hyperparameters
$(\mu, \sigma)$ and group-level parameters $(\alpha_1, \ldots, \alpha_J)$
$-$ has a geometry that makes it simultaneously natural to write and
difficult to sample from.
The pathology is well documented: the centred parameterisation develops
a funnel, non-centred parameterisations help when the prior dominates,
and the right choice depends on a balance of prior and likelihood information
that varies by group and dataset.

A geometric account of this trade-off already exists, at the level of the
\emph{metric}.
\citet{betancourt2015} anatomised the funnel and showed that the optimal
parameterisation is governed by the relative contribution of prior and likelihood
to a group's posterior (centred when the likelihood dominates, non-centred when
the prior does) and
\citet{papaspiliopoulos2007} characterised the centred/non-centred trade-off in
the same terms.
\citet{betancourt2019incomplete} sharpened this into an equivalence: an
incomplete reparameterisation such as centering versus non-centering is
equivalent to modifying the Riemannian metric directly, so the two
parameterisations are mirror-image metric geometries rather than genuinely
different targets.
At the metric level, then, the obstruction is well understood.

What is not yet settled is whether there is structure \emph{beyond} the metric.
The base--fiber split of the joint space is a fiber bundle, and a metric on the
total space induces not only a Riemannian geometry but a \emph{connection}, a
rule for how the fiber must co-move with the base as the hyperparameters change.
A natural hypothesis, which we state here as our own, since to our knowledge
it does not appear in the statistics literature, is that the residual
obstruction is the \emph{curvature} of this connection, carried into the chain
as \emph{holonomy}: a net contraction of the fiber after the hyperparameters
traverse a closed loop (Figure~\ref{fig:geometry_intro}).
Were it true, this would be a connection-level account sitting above the
metric-level one, and it would tell the sampler something the metric does not:
that closed excursions in $(\mu,\sigma)$ leave a systematic, transportable
imprint on the group parameters.

This hypothesis belongs to the geometric program in MCMC, and it is that program
the proof is aimed at. Riemannian-manifold HMC \citep{girolami2011} and the
differential-geometric account of Hamiltonian flow
\citep{betancourt2017,livingstone2019} navigate hierarchical posteriors by
exploiting the curvature of the Fisher metric, while a parallel line learns
reparameterising transformations that absorb the funnel automatically
\citep{gorinova2020}. A connection-curvature account of centering would slot
directly into this program: it would licence samplers that
\emph{parallel-transport} the fiber as the hyperparameters move, rather than only
reparameterise, and would motivate proposals built from the connection. Whether
that structure exists is therefore worth settling before such a method is built.

Our central finding is that this hypothesis is false, and not just for the
canonical model.
The object is the \emph{metric-orthogonal Ehresmann connection}
$\Aconn = -\GFF^{-1}\GBF$, the horizontal distribution $G$-orthogonal to the
fibers, and we prove it is \emph{flat} for \emph{any} smooth hierarchical
posterior (Proposition~\ref{prop:universal_flat}): its horizontal leaves are the
level sets of the fiber score $\partial_{\bm\alpha}\log p$, so a global flat
trivialisation always exists and there is no holonomy to obstruct mixing.
Gaussianity, conjugacy, and the dimension of the fiber are irrelevant; the
logistic GLMM (Proposition~\ref{prop:flat}) is one closed-form instance.
The connection-level account collapses onto the metric-level one.
What survives is not geometric but statistical: the conditional dependence of
the fiber on the base, governed per group by the prior fraction $\pi_j$, the
pooling factor of \citet{gelman2006pooling}, the precise form of the
prior/likelihood balance whose role \citet{betancourt2015} established
geometrically, which the bundle geometry now produces in closed
form, and which, once the connection is known to be flat, is the \emph{only}
thing left governing the obstruction.
The Fisher information actually carries several connections, of which only this
one is at issue; we keep them apart in Section~\ref{sec:which_connection}.
The geometry is not thereby vacuous, however: genuine, even rotational,
curvature reappears as soon as one transports along a sampler's \emph{working}
metric rather than the true Hessian, and that is where holonomy returns --- as
an algorithmic rather than a geometric phenomenon
(Section~\ref{sec:working_metric}).

\begin{figure}[t]
  \centering
  \begin{subfigure}[b]{0.30\linewidth}
    \centering
    \begin{tikzpicture}[>=Stealth, scale=0.87,
        F/.style={draw=blue!60!black, thick},
        S/.style={draw=red!65!black, very thick},
        P/.style={->, gray!65, thin},
        D/.style={circle, fill, inner sep=0pt, minimum size=4pt},
      ]
      \fill[blue!6, rounded corners=2pt]
        (-2.3,-0.18) rectangle (2.3,3.55);
      \draw[black, very thick] (-2.3,0) -- (2.3,0)
        node[right, font=\small] {$\Bs$};
      \node[font=\scriptsize, below=2pt] at (-1.5,-0.02)
        {$(\mu,\sigma)$};
      \foreach \x/\i in {-1.4/1, 0/2, 1.4/3}{
        \node[D] at (\x,0) {};
        \node[below=2pt, font=\scriptsize] at (\x,0) {$\theta_\i$};
        \draw[F] (\x,0) -- (\x,3.38);
        \node[above=1pt, blue!60!black, font=\scriptsize]
          at (\x,3.38) {$\Fs_{\theta_\i}$};
        \draw[P] (\x+.16,1.7) -- (\x+.16,0.12);
      }
      \node[gray!65, font=\scriptsize] at (.48,.85) {$\pi$};
      \draw[decorate, decoration={brace, mirror, amplitude=5pt}, gray!55]
        (2.4,0) -- (2.4,3.38)
        node[midway, right=7pt, gray!55, font=\small] {$\Tot$};
      \node[rotate=90, blue!55!black, font=\scriptsize]
        at (-2.4,1.7) {$(\alpha_1,\ldots,\alpha_J)$};
      \draw[S] (-1.4,2.05)
        to[out=20,in=160] (0,1.52)
        to[out=-20,in=200] (1.4,1.90);
      \foreach \x/\h in {-1.4/2.05, 0/1.52, 1.4/1.90}
        \node[D, red!65!black] at (\x,\h) {};
      \node[S, right=1pt, font=\scriptsize] at (1.41,1.90) {$s$};
    \end{tikzpicture}
    \subcaption{Fiber bundle $\pi\colon\Tot\to\Bs$.  Each base
      point $\theta=(\mu,\sigma)$ has a fiber
      $\Fs_\theta\cong\R^J$ of group parameters; a section $s$
      picks one fiber point per base value.}
    \label{fig:intro_bundle}
  \end{subfigure}
  \hfill
  \begin{subfigure}[b]{0.30\linewidth}
    \centering
    \begin{tikzpicture}[>=Stealth, scale=0.87,
        L/.style={draw=black!65, thick,
          postaction={decorate,
            decoration={markings,
              mark=at position 0.28 with {\arrow{Stealth}}}}},
        F/.style={draw=blue!60!black, thick},
        H/.style={draw=purple!65!black, very thick},
        D/.style={circle, fill, inner sep=0pt, minimum size=4pt},
      ]
      \node[font=\scriptsize, gray!65] at (0,-0.62) {$\Bs$};
      \draw[L] (0,0) ellipse (1.35 and 0.52);
      \node[font=\scriptsize] at (0.3,0.09) {$\gamma$};
      \coordinate (T0) at (-1.35,0);
      \node[D] at (T0) {};
      \node[left=2pt, font=\scriptsize] at (T0) {$\theta_0$};
      \draw[F] ($(T0)+(0,0.06)$) -- ($(T0)+(0,3.4)$);
      \node[above=1pt, blue!60!black, font=\scriptsize]
        at ($(T0)+(0,3.4)$) {$\Fs_{\theta_0}$};
      \coordinate (A0) at ($(T0)+(0,2.58)$);
      \coordinate (HA0) at ($(T0)+(0,1.55)$);
      \node[D, purple!35!gray] at (HA0) {};
      \node[right=2pt, font=\scriptsize, purple!40!gray] at (HA0)
        {$H_j\,\alpha_j^0$};
      \draw[decorate, decoration={brace, amplitude=4pt}, purple!35!gray]
        ($(T0)+(-0.12,1.55)$) -- ($(T0)+(-0.12,2.55)$)
        node[midway, left=5pt, font=\scriptsize, purple!40!gray]
          {$\Delta\alpha_j$};
      \draw[->, dashed, purple!30!gray]
        (1.02,0.12) to[bend right=30] ($(T0)+(0.14,1.62)$);
      \draw[->, H]
        (1.02,-0.06) to[bend right=42]
        node[right=1pt, font=\scriptsize, purple!65!black, pos=0.55] {$\Delta\alpha_j{=}0$}
        ($(T0)+(0.12,2.5)$);
      \node[D, red!65!black] at (A0) {};
      \node[right=2pt, font=\scriptsize, red!65!black] at (A0)
        {$\alpha_j^0$};
    \end{tikzpicture}
    \subcaption{Apparent holonomy is a linearisation artifact.  The
      fiber-frozen (\emph{linearised}) connection seems to contract the fiber
      around $\gamma$ to $H_j\,\alpha_j^0$ (faded), but the true connection
      transports it back to itself: $\Delta\alpha_j=0$
      (Prop.~\ref{prop:flat}).}
    \label{fig:intro_holonomy}
  \end{subfigure}
  \hfill
  \begin{subfigure}[b]{0.30\linewidth}
    \centering
    \begin{tikzpicture}[>=Stealth, scale=0.87,
        V/.style={->, draw=orange!80!black, very thick},
        Hs/.style={->, draw=teal!65!black, very thick},
        X/.style={->, draw=black!70, thick},
        D/.style={circle, fill, inner sep=0pt, minimum size=4pt},
      ]
      \draw[gray!40, thin] (0,-0.3) -- (0,3.4)
        node[above, font=\scriptsize, gray!55] {$\delta\alpha$};
      \draw[gray!40, thin] (-0.3,0) -- (3.4,0)
        node[right, font=\scriptsize, gray!55] {$\delta\theta$};
      \node[D] at (0,0) {};
      \node[below left=1pt, font=\scriptsize] at (0,0)
        {$(\theta,\alpha)$};
      \draw[V] (0,0) -- (0,2.4)
        node[right=2pt, font=\scriptsize, orange!80!black] {$V_p$};
      \draw[Hs] (0,0) -- (2.6,1.3)
        node[right=2pt, font=\scriptsize, teal!65!black] {$H_p$};
      \draw[teal!55!black, thin, dashed]
        (2.6,0) -- (2.6,1.3)
        node[right=2pt, midway, font=\scriptsize, teal!65!black]
          {$\Aconn\,\delta\theta$};
      \draw[teal!55!black, thin, dashed]
        (0,0) -- (2.6,0)
        node[below=2pt, midway, font=\scriptsize, teal!65!black]
          {$\delta\theta$};
      \draw[X] (0,0) -- (2.6,2.7)
        node[above right=-2pt, font=\scriptsize] {$X$};
      \draw[->, draw=orange!65!black, thick, dashed]
        (2.6,1.3) -- (2.6,2.7)
        node[right=2pt, midway, font=\scriptsize, orange!80!black]
          {$X^V$};
      \node[font=\scriptsize, align=center] at (1.5,-0.68)
        {$\Aconn = {-}\GFF^{-1}\GBF$};
    \end{tikzpicture}
    \subcaption{The connection $\Aconn=-\GFF^{-1}\GBF$ decomposes
      the tangent space at $(\theta,\alpha)\in\Tot$ into vertical
      $V_p$ (fiber directions) and horizontal $H_p$ (prescribed
      co-movement $\delta\alpha=\Aconn\,\delta\theta$).  A NUTS
      proposal $X$ carries a non-zero vertical residual $X^V$.}
    \label{fig:intro_connection}
  \end{subfigure}
  \caption{Geometric structure of a hierarchical posterior.
    \emph{(a)}~The joint space $\Tot$ is a fiber bundle over the
    hyperparameter base $\Bs$; each fiber $\Fs_\theta\cong\R^J$
    carries the group parameters $(\alpha_1,\ldots,\alpha_J)$.
    \emph{(b)}~The naive holonomy picture, in which a closed loop
    $\gamma$ in $\Bs$ contracts the fiber by $H_j<1$.  This is the
    behaviour of the linearised connection only; we prove the true
    connection is flat (Prop.~\ref{prop:flat}), so its holonomy is
    trivial.
    \emph{(c)}~The Ehresmann connection $\Aconn=-\GFF^{-1}\GBF$
    defines a horizontal subspace $H_p$ at each point; a generic
    NUTS proposal $X$ has a non-zero vertical residual $X^V$
    because the sampler does not know $\Aconn$.}
  \label{fig:geometry_intro}
\end{figure}

Standard diagnostics do not identify this mechanism.
$\hat{R}$ \citep{gelman1992, vehtari2021} measures chain-to-chain variability.
ESS measures autocorrelation.
Divergence counts flag regions of high curvature in the Hamiltonian, not in the
parameter-space geometry.
All of these register the \emph{symptoms} of the obstruction (slow mixing in
the affected coordinates), but none localises the base--fiber coupling to
specific groups, and none indicates the remedy (which groups to
reparameterise).
The empirical sections show that the mechanism's chain-level signature is
structured conditional autocorrelation localised in prior-dominated groups
and predicted by the analytic prior fraction $\pi_j$, rather than
transport-following drift --- consistent with the connection being flat.

\subsection{Contributions}

This paper makes the following contributions.

\begin{enumerate}
  \item \textbf{Geometric framework and a universal flatness theorem}
    (Section~\ref{sec:geometry}).
    We formalise the fiber bundle structure of hierarchical posterior parameter
    spaces, on which the Fisher information metric induces a natural Ehresmann
    connection $A = -\GFF^{-1}\GBF$.
    Our main geometric result (Proposition~\ref{prop:universal_flat}) is that
    this connection is \emph{flat} for \emph{any} smooth hierarchical posterior,
    of any fiber dimension and any prior: its horizontal leaves are the level
    sets of the fiber score $\partial_{\bm\alpha}\log p$, so the holonomy is
    trivial and the closed-form logistic-GLMM case
    (Proposition~\ref{prop:flat}) is a corollary.
    The nonzero exterior-derivative curvature $F_j = -2/(\sigma^5 G_{FF,j}^2)$
    is only the fiber-frozen \emph{linearisation}; the true connection
    transports trivially, and the synthetic ``holonomy'' it appears to show is
    that linearisation artifact.

  \item \textbf{The prior fraction as the operative quantity}
    (Section~\ref{sec:glmm_connection}).
    With the connection flat, the obstruction reduces to the conditional
    dependence of fiber on base, governed per group by the prior fraction
    $\pi_j = (1/\sigma^2)/G_{FF,j}$.
    This quantity is not new: it is exactly the pooling factor $\omega$ of
    \citet{gelman2006pooling} and the per-group form of the prior/likelihood
    balance that \citet{betancourt2015} tied to the optimal parameterisation.
    Our contribution is to derive it from the bundle geometry as the
    \emph{only} quantity left governing the obstruction once the connection is
    known to be flat, to give it in closed form for the GLMM, and to compute it
    per coordinate from a fitted model.

  \item \textbf{A loop-conditional dependence diagnostic and an attribution
    null} (Section~\ref{sec:diagnostic}).
    We propose a chain-based diagnostic that estimates per-group
    loop-conditional dependence coefficients $\hat{\rho}_j$ by weighted least
    squares over approximate loops in the hyperparameter chain.
    A direct attribution test, comparing observed fiber displacements with
    parallel-transport predictions along the same chain segments, returns a
    null result, exactly as flatness requires: NUTS does not transport the
    fiber, so the chain-level signature is excess conditional autocorrelation
    localised in prior-dominated groups, not transport-following drift.

  \item \textbf{Where curvature genuinely lives: working-metric connections}
    (Section~\ref{sec:working_metric}).
    Flatness holds for the connection of the \emph{true} Hessian metric.
    For a connection built from a \emph{working} metric (a sampler's fixed
    mass matrix), the curvature is generically nonzero and can be rotational,
    which we demonstrate with derivatives validated against automatic
    differentiation.
    A fixed-metric sampler transports along this curved connection, so holonomy
    re-enters as an \emph{algorithmic} rather than geometric phenomenon --- the
    natural bridge to connection-aware sampling.

  \item \textbf{Simulation study} (Section~\ref{sec:simulation}).
    A $4 \times 4$ factorial study over observations per group
    ($n_j \in \{3,10,30,100\}$) and hierarchical standard deviation
    ($\sigma_\text{true} \in \{0.5,1,2,3\}$), 10 replicates per cell, confirms
    that the diagnostic signal tracks the analytic prior fraction.
    The relationship with $\sigma$ is non-monotone, separating funnel severity
    (a base-space pathology) from the base--fiber coupling that $\pi_j$
    measures.

  \item \textbf{Practitioner guidance and a controlled comparison}
    (Section~\ref{sec:guidance}).
    We situate $\pi_j$ within established remedies, per-group partial
    non-centering \citep{papaspiliopoulos2003, gorinova2020}, interweaving
    \citep{yu2011}, and fiber marginalisation, so a flagged group comes with
    a standard-tool response, with $w_j \approx \pi_j$ supplying the per-group
    weight analytically.
    A controlled mixed-design comparison shows this rule is a safe interpolator
    that tracks the better uniform parameterisation without surpassing it when
    the min-ESS bottleneck is the $\sigma$-funnel.
    A connection-corrected block sampler (\texttt{horizontal\_mcmc}) is included
    as a proof of concept.

  \item \textbf{The \texttt{fibr} R package} (Section~\ref{sec:package}).
    The prior fraction (with a \texttt{brms} adapter) and the
    \texttt{smoothbp\_advisor()} companion are distributed as the open-source R
    package \texttt{fibr} \citep{bindoff2026fibr}; the connection, dependence, and
    sampler methods accompany the paper as reproduction code in the repository.
\end{enumerate}

\subsection{Notation}

Scalars are unadorned ($\mu, \sigma, \alpha_j$).
Vectors are lower-case bold ($\bm{\alpha}$).
Matrices are upper-case ($G, A, H$).
We write $J$ for the number of groups, $K$ for the dimension of the base space
($K=2$ throughout: $\theta = (\mu, \sigma)$), $M$ for the number of detected
loop pairs, and $N$ for the total number of observations.
The notation $[J]$ denotes $\{1, \ldots, J\}$.
$\norm{\cdot}_F$ is the Frobenius norm.

\section{Background}
\label{sec:background}

\subsection{Hierarchical models and the centering problem}

The two-level normal hierarchy, and its generalisations to GLMMs, are the
canonical setting where posterior geometry creates sampling difficulty.
\citet{papaspiliopoulos2003, papaspiliopoulos2007} characterised the
centering/non-centering trade-off: the centred parameterisation is efficient when
the likelihood identifies the group-level parameters well, while the non-centred
parameterisation is efficient when the prior dominates.
\citet{betancourt2015} described the geometric anatomy of the resulting funnel and
established the role of the prior/likelihood balance in determining the optimal
parameterisation.
\citet{betancourt2019incomplete} gave the sharpest geometric statement to date:
centering versus non-centering is an \emph{incomplete} reparameterisation,
equivalent to a change of the Riemannian metric, so the two parameterisations
are mirror-image metric geometries.
These are metric-level accounts, and they are essentially complete at that
level; what they leave open, and what we take up, is whether the
base--fiber \emph{connection} adds anything beyond the metric.
The per-group quantity at the centre of our answer, the prior fraction $\pi_j$,
makes this balance precise and coincides with the shrinkage / pooling
factor of \citet{gelman2006pooling}; Sections~\ref{sec:geometry}
and~\ref{sec:glmm_connection} derive it from the bundle geometry.

\subsection{Riemannian manifold HMC}
\label{sec:rmhmc_bg}

\citet{girolami2011} proposed Riemannian manifold Langevin and Hamiltonian Monte
Carlo (RMHMC), using the Fisher information metric as a position-dependent mass
matrix.
This adapts the sampler to the local curvature of the posterior manifold and
dramatically improves performance in the funnel and related geometries.
\citet{betancourt2013} extended this to a general metric framework.
\citet{zhang2014} proposed a semi-separable decomposition for hierarchical models,
splitting the Hamiltonian into base and fiber contributions with a Schur complement
correction.
\citet{kleppe2024} showed that the log-density gradient covariance provides a
practical proxy for the Fisher metric without requiring analytical computation.

These are all metric-adaptation methods: they precondition with the Fisher
metric or a proxy for it.
Our results bear on them in two opposite ways.
First, the base--fiber Schur-complement correction of \citet{zhang2014} is built
from the same blocks as our connection $\Aconn = -\GFF^{-1}\GBF$, which is the
horizontal lift of a base move; Proposition~\ref{prop:universal_flat} shows this
connection is flat, so for the \emph{true} Fisher metric there is no horizontal
information beyond the metric to exploit --- consistent with the
reparameterisation--metric equivalence of \citet{betancourt2019incomplete}.
Second, and conversely, a sampler that \emph{fixes} its mass matrix, as adaptive
HMC does after warmup, transports the fiber along the connection of that
\emph{working} metric, which is generically curved
(Section~\ref{sec:working_metric}).
That is the regime in which a connection-aware correction has genuine content,
and it is the subject of a companion paper.

\subsection{Fiber bundles in statistics}

Differential and information geometry have a long history in statistics:
the dually-flat structure of exponential families and the
$\alpha$-connections of \citet{amari2000} are standard, and tangent-bundle
constructions appear throughout that literature.
What is new here, to our knowledge, is the specific object: the
metric-orthogonal \emph{Ehresmann connection} of the base--fiber split of a
hierarchical posterior, $A=-\GFF^{-1}\GBF$, together with its curvature
analysis and the resulting flatness theorem.
The connection describes how fibers are parallel-transported as the base
moves.
We give a self-contained treatment of the necessary geometry in
Section~\ref{sec:geometry}; accessible references include
\citet{kobayashi1963} and \citet{nakahara2003}.

\section{The Geometric Framework}
\label{sec:geometry}

\subsection{Fiber bundle structure of hierarchical posteriors}

Consider a two-level hierarchical model with hyperparameters
$\bm{\theta} \in \Bs \subseteq \R^K$ (the \emph{base space}) and
group-level parameters $\bm{\alpha} \in \R^J$ (the \emph{fiber}).
The joint parameter space is the \emph{total space}
$\Tot = \Bs \times \R^J$, equipped with the projection
$\pi: \Tot \to \Bs$, $\pi(\bm{\theta}, \bm{\alpha}) = \bm{\theta}$.

The fiber over a point $\bm{\theta} \in \Bs$ is
$\Fs_{\bm{\theta}} = \pi^{-1}(\bm{\theta}) \cong \R^J$,
representing the group-level parameters conditioned on the hyperparameters.
In the canonical GLMM:
\begin{align}
  \Bs &= \{(\mu, \sigma) : \mu \in \R,\, \sigma > 0\}, \quad K = 2, \\
  \Fs_{(\mu,\sigma)} &= \R^J \quad
    \text{with prior } \alpha_j \mid \mu, \sigma \sim \Normal(\mu, \sigma^2).
\end{align}
Fixed effects $\bm{\beta}$ may be treated as part of the base space or
marginalised; for clarity we include them in the base throughout and write
$\bm{\theta} = (\mu, \sigma, \bm{\beta})$ when they are relevant.

\begin{remark}
  The bundle $\Tot \to \Bs$ is topologically trivial ($\Tot \cong \Bs \times \R^J$).
  The off-diagonal Fisher metric blocks $\GBF$ couple base and fiber in a way
  that depends on the current parameter values, so the induced connection is
  non-trivial as a \emph{distribution}; whether it is \emph{curved} is a
  separate question, and Section~\ref{sec:glmm_connection} shows that for the
  GLMM it is not (the connection is flat).
  The centering/non-centering debate is, geometrically, a search for a
  trivialisation that flattens the induced connection; flatness guarantees one
  exists.
\end{remark}

\subsection{The Fisher metric and its block structure}

The Fisher information metric on $\Tot$ is the expected negative Hessian of the
log-posterior:
\begin{equation}
  G_{pq}(\bm{\theta}, \bm{\alpha}) =
  -\E_{y \mid \bm{\theta}, \bm{\alpha}}\!\left[
    \frac{\partial^2 \log p(y, \bm{\theta}, \bm{\alpha})}
         {\partial \xi_p \, \partial \xi_q}
  \right],
  \label{eq:fisher_metric}
\end{equation}
where $\bm{\xi} = (\bm{\theta}, \bm{\alpha})$ concatenates all parameters.
The block structure of $G$ in the base--fiber decomposition is
\begin{equation}
  G =
  \begin{pmatrix} \GBB & \GBF^\top \\ \GBF & \GFF \end{pmatrix},
  \label{eq:G_block}
\end{equation}
where $\GFF \in \R^{J \times J}$ is the fiber metric block,
$\GBF \in \R^{J \times K}$ is the off-diagonal coupling block, and
$\GBB \in \R^{K \times K}$ is the base metric block.
The off-diagonal block $\GBF$ encodes how base and fiber directions are coupled:
it is non-zero whenever the prior creates statistical dependence between
$\bm{\theta}$ and $\bm{\alpha}$.

Two points fix the precise object the connection is built from.
First, $\GBF$ arises only from the prior $p(\bm{\alpha}\mid\bm{\theta})$,
the likelihood does not depend on the hyperparameters, so it carries no
expectation over $y$:
$\GBF = -\partial_{\bm{\alpha}}\partial_{\bm{\theta}}\log p(\bm{\alpha}\mid\bm{\theta})$.
Second, the connection \eqref{eq:connection} and the flatness analysis below use
the negative Hessian of the \emph{realised} log-posterior, the observed
information the sampler sees at a point.
This coincides with the expected Fisher metric \eqref{eq:fisher_metric}
whenever the likelihood Hessian is data-independent, as it is for canonical-link
GLMs, including the logistic GLMM of Section~\ref{sec:glmm_connection}, where
$-\partial^2_{\alpha_j}\log p(y\mid\alpha_j) = \sum_i p_{ji}(1-p_{ji})$ does not
depend on $y$.
We keep the name \emph{Fisher information metric} for both, in keeping with the
RMHMC literature; the distinction matters only for non-canonical links, where
the observed-information connection is the relevant object.

\subsection{The Ehresmann connection}

\begin{definition}[Ehresmann connection]
  Given the block metric \eqref{eq:G_block}, the \emph{Ehresmann connection
  form} is the $J \times K$ matrix-valued function
  \begin{equation}
    \Aconn(\bm{\theta}, \bm{\alpha})
    = -\GFF(\bm{\theta}, \bm{\alpha})^{-1}\,
       \GBF(\bm{\theta}, \bm{\alpha}).
    \label{eq:connection}
  \end{equation}
  The \emph{horizontal lift} of a base tangent vector
  $\bm{v}_B \in T_{\bm{\theta}}\Bs$ to a point
  $(\bm{\theta}, \bm{\alpha}) \in \Tot$ is the total-space vector
  $\tilde{\bm{v}} = (\bm{v}_B,\, \Aconn \bm{v}_B)$: the unique direction
  in $T\Tot$ that is $G$-orthogonal to the fiber.
\end{definition}

Geometrically, $\Aconn$ tells us how the fiber must co-move with the base to
remain ``stationary'' relative to the posterior geometry.
Moving the hyperparameters by $\delta\bm{\theta}$ while keeping the fiber
on the horizontal slice requires a simultaneous fiber displacement of
$\delta\bm{\alpha} = \Aconn\,\delta\bm{\theta}$.

\begin{remark}[Relationship to centering]
  In the centred parameterisation of a normal--normal hierarchy,
  $A_{j,\mu} = 1/(\sigma^2 G_{FF,j})$ (derived in Section~\ref{sec:glmm_connection}).
  This is positive and large when $\sigma$ is small (tight prior, high prior
  precision), meaning the conditional posterior of every $\alpha_j$ tracks
  $\mu$ tightly.
  The centred chain explores this slowly; we show below that the obstruction is
  this conditional dependence rather than any holonomy of $\Aconn$.
\end{remark}

\subsection{Which connection? A disambiguation}
\label{sec:which_connection}

The Fisher information carries several connections, and our results concern one
of them specifically.
We separate three objects on which the words ``flat'' and ``curved'' have
different answers.

\emph{(i) The Levi-Civita connection of the Fisher--Rao metric on the base.}
The intrinsic Riemannian geometry of the location-scale family
$(\mu, \sigma)$ is hyperbolic: the Fisher--Rao metric has constant negative
curvature and is \emph{not} flat \citep{amari2000}.
This is the genuine curvature of the base manifold, and it is the geometric
content of the funnel.
Nothing in this paper removes or trivialises it.

\emph{(ii) The Amari $e$- and $m$-connections.}
Because the Gaussian is an exponential family, the dual $\alpha = \pm 1$
connections are flat: affine (natural and expectation) coordinates exist in
which they vanish \citep{amari2000}.
That the $\alpha = \pm 1$ connections are flat while the $\alpha = 0$
(Levi-Civita) connection of (i) is not is the standard fact that dual flatness
does not imply Riemannian flatness.

\emph{(iii) The metric-orthogonal Ehresmann connection $\Aconn$ of
\eqref{eq:connection}.}
This is not an affine connection on a statistical manifold but an Ehresmann
connection on the parameter bundle $\Tot \to \Bs$: a horizontal distribution,
$G$-orthogonal to the fibers.
Propositions~\ref{prop:universal_flat} and~\ref{prop:flat} concern this object,
and \emph{only} this object: its curvature, the obstruction to integrability of
the horizontal distribution, vanishes identically.

Every ``flat'', ``curvature'', and ``holonomy'' statement in
Sections~\ref{sec:diagnostic}--\ref{sec:discussion} refers to $\Aconn$ of
(iii).
The nonzero $F_j$ of \eqref{eq:curvature} below is the curvature of the
\emph{linearised} connection $\tilde\Aconn(\bm\theta) := \Aconn(\bm\theta,
\bm\alpha_0)$ obtained by freezing the fiber, a different Ehresmann
connection that approximates $\Aconn$ near a fixed fiber point, and is the
quantity our synthetic-loop validation integrates.
When we later call this nonzero $F_j$ an ``artifact,'' we mean precisely that it
is the curvature of $\tilde\Aconn$, not of $\Aconn$; we do \emph{not} mean the
Fisher--Rao curvature of (i), which is real and which we leave untouched.
Why $\Aconn$ is flat, its horizontal leaves are the level sets of the fiber
score, is the subject of Section~\ref{sec:universal_flat}.
The flat coordinate that results is the conditional score, which for the GLMM is
the \emph{precision}-weighted residual $(\alpha_j-\mu)/\sigma^2$, not the
SD-weighted non-centering $(\alpha_j-\mu)/\sigma$; both reparameterise the fiber,
but only the score is horizontal for $\Aconn$.

\subsection{Holonomy and curvature}

\begin{definition}[Holonomy]
  Let $\gamma: [0,1] \to \Bs$ be a closed loop in the base
  ($\gamma(0) = \gamma(1) = \bm{\theta}_0$).
  Starting from a fiber point $\bm{\alpha}_0 \in \Fs_{\bm{\theta}_0}$, the
  \emph{parallel transport} of $\bm{\alpha}_0$ along $\gamma$ is the path
  $t \mapsto \bm{\alpha}(t)$ in $\Tot$ satisfying
  \begin{equation}
    \dot{\bm{\alpha}}(t) = \Aconn(\gamma(t), \bm{\alpha}(t))\,\dot{\gamma}(t),
    \quad \bm{\alpha}(0) = \bm{\alpha}_0,
    \label{eq:transport}
  \end{equation}
  and the \emph{holonomy} is the net displacement
  $\Delta\bm{\alpha} = \bm{\alpha}(1) - \bm{\alpha}_0$ after traversing the loop.
\end{definition}

The \emph{curvature} of the connection is the obstruction to integrability of
its horizontal distribution, equivalently, the vertical component of the Lie
bracket of two horizontal lifts, and it governs holonomy: for a small loop
enclosing area $\mathcal{A}$, the holonomy is
$\Delta\bm{\alpha} \approx (\text{curvature}) \cdot \mathcal{A} +
O(\mathcal{A}^{3/2})$, so a connection of zero curvature, a \emph{flat}
connection, has trivial holonomy around contractible loops.
The entire centring/non-centring intuition rests on the hypothesis that
$\Aconn$ is curved.
The next subsection shows it is not.

\subsection{Universal flatness: the score is a first integral}
\label{sec:universal_flat}

The curvature in \eqref{eq:curvature}--\eqref{eq:stokes} is the linearised one.
The \emph{full} curvature of $\Aconn$ is governed by a single fact that does not
depend on the model at all.

\begin{proposition}[Universal flatness of the metric connection]
\label{prop:universal_flat}
  Let $\ell(\bm{\theta}, \bm{\alpha})$ be a smooth log-posterior on
  $\Tot = \Bs \times \R^J$ with fiber-precision block
  $\GFF = -\partial^2_{\bm{\alpha}}\ell$ nonsingular on an open region
  $U \subseteq \Tot$, and $\GBF = -\partial_{\bm{\alpha}}\partial_{\bm{\theta}}\ell$.
  The metric-orthogonal Ehresmann connection $\Aconn = -\GFF^{-1}\GBF$ is
  \emph{flat} on $U$: its horizontal distribution is integrable, with the
  \emph{fiber score} $\bm{s}(\bm{\theta},\bm{\alpha}) = \partial_{\bm{\alpha}}\ell$
  as a (vector) first integral.
  On any simply connected subregion the holonomy is trivial; and where the
  conditional score $\bm{\alpha} \mapsto \partial_{\bm{\alpha}}\ell$ is a
  diffeomorphism (in particular when the conditional posterior is log-concave,
  as in the logistic GLMM), the map
  $(\bm{\theta}, \bm{\alpha}) \mapsto (\bm{\theta}, \bm{s})$ is a global flat
  trivialisation; otherwise it is local.
\end{proposition}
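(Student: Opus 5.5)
The plan is to show directly that the fiber score $\bm{s} = \partial_{\bm\alpha}\ell$ is constant along every horizontal curve. Flatness, trivial holonomy and the trivialisation claims will then follow from standard facts about foliations. Take a horizontal curve $t\mapsto(\bm\theta(t),\bm\alpha(t))$, so $\dot{\bm\alpha} = \Aconn\dot{\bm\theta} = -\GFF^{-1}\GBF\dot{\bm\theta}$, and differentiate the score along it:
\begin{equation*}
  \frac{d}{dt}\bm{s}
  = \partial_{\bm\theta}\partial_{\bm\alpha}\ell\,\dot{\bm\theta}
    + \partial^2_{\bm\alpha}\ell\,\dot{\bm\alpha}
  = -\GBF\dot{\bm\theta} - \GFF\dot{\bm\alpha}
  = -\GBF\dot{\bm\theta} + \GBF\dot{\bm\theta} = 0 .
\end{equation*}
Here I use the block conventions of the statement, with $\GBF$ being $J\times K$ and $\partial_{\bm\alpha}\partial_{\bm\theta}\ell = -\GBF$. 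Smoothness of $\ell$, i.e.\ symmetry of mixed partials, is exactly what makes the two blocks match.

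The key step is to turn this conservation law into integrability. On $U$ the Jacobian of the map $\bm{s}:U\to\R^J$ with respect to $\bm\alpha$ is $-\GFF$, which is nonsingular. So $d\bm{s}$ has rank $J$ everywhere on $U$, and by the implicit function theorem the level sets $\{\bm{s}=c\}$ are embedded $K$-dimensional submanifolds, locally graphs $\bm\alpha = \bm\alpha_c(\bm\theta)$ over the base. Their tangent spaces are $\ker d\bm{s}$, and the computation above shows that the horizontal space $H_p=\{(\bm v,\Aconn\bm v)\}$ lies in $\ker d\bm{s}$. Both spaces have dimension $K$, so they coincide. Hence the horizontal distribution is tangent to a foliation, which means it is integrable and its curvature vanishes (Frobenius). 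Equivalently, the vertical part of the bracket of two horizontal lifts is zero, because brackets of vector fields tangent to the leaves stay tangent to them.

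For holonomy, note that parallel transport along a base curve $\gamma$ is obtained by solving \eqref{eq:transport}, and the conservation law says this amounts to following the leaf $\bm{s}=\bm{s}(\bm\theta_0,\bm\alpha_0)$. Locally, the implicit function theorem gives the unique solution $\bm\alpha(t)=\bm\alpha_c(\gamma(t))$. Around a closed loop in a simply connected region, the local graph descriptions patch together into a single-valued function by a monodromy (continuation) argument. So the endpoint returns to $\bm\alpha_0$ and $\Delta\bm\alpha=0$.

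For the global statement, assume $\bm\alpha\mapsto\bm{s}(\bm\theta,\bm\alpha)$ is a diffeomorphism of $\R^J$ for each $\bm\theta$. For log-concave conditionals, $\GFF$ is positive definite, so the score map is the gradient of a strictly concave function and is injective. Surjectivity needs that the conditional log-posterior is coercive. The map $\Phi(\bm\theta,\bm\alpha)=(\bm\theta,\bm{s})$ is then a bijection whose Jacobian is block-triangular with invertible diagonal blocks, so it is a global diffeomorphism by the inverse function theorem. It sends horizontal leaves to the slices $\{\bm{s}=\text{const}\}$, which is a flat trivialisation. Without the diffeomorphism hypothesis, the same argument gives only a local flat chart.

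I expect the main obstacle to be the global part rather than the flatness computation. Two points need care there: justifying surjectivity of the score map (a coercivity assumption, which the logistic GLMM satisfies because of the Gaussian prior term), and making sure the transport ODE does not leave $U$ or blow up along the loop. I would handle both by restricting attention to loops whose transported leaf stays in $U$.
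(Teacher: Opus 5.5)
Your proposal is correct and follows the paper's proof. You show that $\bm{s}=\partial_{\bm\alpha}\ell$ is constant along horizontal lifts, and nonsingularity of $\GFF$ makes $\ker d\bm{s}$ exactly $K$-dimensional, so the horizontal distribution is the tangent distribution of the foliation by level sets of $\bm{s}$. Beyond that, your continuation argument for trivial holonomy, and your point that log-concavity gives only injectivity of the score map (surjectivity needs superlinear decay of the conditional log-posterior, which the Gaussian prior supplies), fill in the global claims that the paper's proof asserts without argument.
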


\begin{proof}
  The horizontal lift of $\bm{v} \in T_{\bm{\theta}}\Bs$ is
  $(\bm{v}, \Aconn\bm{v})$.
  Differentiating the score along it,
  \[
    d\bm{s}(\bm{v}, \Aconn\bm{v})
    = \partial^2_{\bm{\alpha}\bm{\alpha}}\ell\,(\Aconn\bm{v})
      + \partial^2_{\bm{\alpha}\bm{\theta}}\ell\,\bm{v}
    = (-\GFF)(-\GFF^{-1}\GBF\bm{v}) + (-\GBF)\bm{v}
    = \bm{0}.
  \]
  So $\bm{s}$ is constant along horizontal curves and the horizontal
  distribution lies in $\ker d\bm{s}$.
  Where $\GFF$ is nonsingular, $d\bm{s}$ has rank $J$ (its fiber block $-\GFF$ is
  invertible), so $\ker d\bm{s}$ has dimension $K$, equal to the rank of the
  horizontal distribution; the two coincide.
  The horizontal distribution is therefore the tangent distribution of the
  foliation of $U$ by the level sets of the globally defined map $\bm{s}$, hence
  integrable, and its curvature vanishes.
\end{proof}

Three consequences are worth stating.
First, \emph{the flatness has nothing to do with conjugacy, Gaussianity, or the
dimension of the fiber}: the proof uses only that $\ell$ is smooth and $\GFF$ is
nonsingular.
The closed-form GLMM computation of Section~\ref{sec:glmm_connection}
(Proposition~\ref{prop:flat}) is one instance, retained because the explicit
cancellation is instructive and because it pins down the linearised curvature
$F_j$ (Section~\ref{sec:linearised_curvature}).
Second, the flat coordinate is the conditional score
$\bm{s} = \partial_{\bm{\alpha}}\ell$, equivalently the residual to the
conditional mode (where $\bm{s} = \bm{0}$); for a Gaussian prior this is the
precision-weighted residual $(\alpha_j - \mu)/\sigma^2$.
Third, the hypotheses are sharp.
Flatness can fail only where $\GFF$ is \emph{singular} (a degenerate or
unidentified conditional, the fiber-degeneracy mode of
Section~\ref{sec:discussion}) or for a connection built from a metric other
than the true Hessian.
The latter is where genuine, and even rotational, curvature appears; we return
to it in Section~\ref{sec:working_metric}.

\subsection{The linearised curvature}
\label{sec:linearised_curvature}

A non-zero curvature reappears the moment one \emph{freezes} the fiber.
Treating $\Aconn$ as a function of the base alone, holding $\bm{\alpha}$ at a
fixed $\bm{\alpha}_0$, its exterior derivative gives, for the $K = 2$ base, a
$J$-vector
\begin{equation}
  F_j = \frac{\partial A_{j,2}}{\partial \theta^1}
        - \frac{\partial A_{j,1}}{\partial \theta^2},
  \quad j \in [J],
  \label{eq:curvature}
\end{equation}
where $A_{j,k}$ is the $(j,k)$ entry of $\Aconn$ (for a base of dimension $K$ the
curvature is a $J \times K \times K$ antisymmetric tensor; only $K = 2$ is
needed here).
This is the curvature of the \emph{linearised} (fiber-frozen) connection
$\tilde\Aconn(\bm{\theta}) = \Aconn(\bm{\theta}, \bm{\alpha}_0)$.
The full curvature of $\Aconn$ adds the vertical terms
$\Aconn\,\partial_{\bm{\alpha}}\Aconn$ that
Proposition~\ref{prop:universal_flat} cancels exactly; $F_j$ keeps only the first
piece.
Its Stokes prediction,
\begin{equation}
  \Delta\alpha_j \approx F_j(\bm{\theta}_0,\bm{\alpha}_0)\,\mathcal{A}
  + O(\mathcal{A}^{3/2}),
  \label{eq:stokes}
\end{equation}
is therefore the holonomy of $\tilde\Aconn$, not of $\Aconn$, whose holonomy is
zero.
The linearised object is not a fiction: it is what one obtains by evaluating the
connection at a fixed fiber point and ignoring its motion, which is exactly what
a numerical validation that holds $\bm{\alpha}_0$ fixed computes
(Section~\ref{sec:stokes_validation}) and what the chain diagnostic sees at the
loop time-scale.
We compute $F_j$ in closed form for the GLMM in
Section~\ref{sec:glmm_connection}.

\subsection{What the framework requires of a model}

To apply the dependence diagnostic (Section~\ref{sec:diagnostic}) to a new
hierarchical model, the practitioner needs only a posterior chain:
the diagnostic algorithm uses the chain draws themselves and requires no
analytical computation.
To compute the analytic connection and curvature
(Section~\ref{sec:glmm_connection}), one additionally needs closed-form or
automatic-differentiation-computed expressions for $\GFF$ and $\GBF$.
For the two-level logistic GLMM, both are available analytically; the
extension to models where AD must be used is future work
(Section~\ref{sec:discussion}).

\section{The Loop-Conditional Dependence Diagnostic}
\label{sec:diagnostic}

Because the connection is flat (Section~\ref{sec:glmm_connection}), the
chain-level quantity this diagnostic estimates is not geometric holonomy but
\emph{loop-conditional fiber dependence}: the residual dependence of the fiber
at the end of an approximate base-space loop on its value at the start.
We denote the per-group coefficient $\hat{\rho}_j$ to distinguish it from the
(trivial) geometric holonomy $H_j$ of Section~\ref{sec:glmm_connection}.
The diagnostic is implemented as \texttt{holonomy\_diagnostic()} in
\texttt{fibr}; the function name is retained for continuity.

Why estimate this at all, given that Proposition~\ref{prop:universal_flat}
already settles the geometry and the analytic prior fraction $\pi_j$
(Section~\ref{sec:glmm_connection}) localises the coupling more cleanly?
The diagnostic is \emph{secondary} to $\pi_j$ and we treat it as such, but it
earns a place on three grounds.
First, it is \emph{model-agnostic}: it uses only the posterior draws and needs
no closed-form or differentiated metric blocks, so it applies to models where
$\GFF$ and $\GBF$ are unavailable and $\pi_j$ cannot be computed.
Second, it is \emph{loop-conditional} --- it measures fiber dependence
conditioned on the hyperparameters returning to (approximately) where they
started, which is not the same as a plain per-group effective sample size: that
conditioning is what isolates the base--fiber coupling from generic fiber
autocorrelation.
Third, it is the \emph{empirical} counterpart of $\pi_j$, and the agreement
between the two (Section~\ref{sec:simulation}) is what ties the analytic theory
to chain behaviour.
Where $\pi_j$ is available it is the primary, more robust per-group flag
(Section~\ref{sec:discussion}); the diagnostic corroborates and extends it.

\subsection{Overview}

Given a posterior chain on $(\bm{\theta}, \bm{\alpha})$, the diagnostic
proceeds in four steps:
(1) detect approximate loops in the base space $\Bs$,
(2) estimate per-group loop-conditional dependence coefficients $\rho_j$ from
the fiber values at loop endpoints,
(3) assess sensitivity to the minimum loop gap $g$, and
(4) corroborate flatness empirically (the attribution test) by comparison with
the analytically integrated parallel transport
(Section~\ref{sec:attribution}).

A point of interpretation governs everything that follows, so we state it
up front.
For a chain in equilibrium, the fiber value at the end of a detected loop is
a draw from the conditional posterior given (approximately) the same base
point as at the start.
If the chain mixed instantly, $\bm{\alpha}_e$ and $\bm{\alpha}_s$ would be
independent draws from that conditional, and the best linear dependence map
would be $P \approx 0$ (after residualisation; Section~\ref{sec:transport_estimation}).
A slowly mixing chain instead retains memory of $\bm{\alpha}_s$, giving
$P$ closer to the identity.
The raw coefficients therefore measure \emph{loop-conditional fiber
dependence at the loop time-scale}.
In principle this could be driven by connection geometry (systematic, signed,
area-dependent transport) as well as by generic autocorrelation (unsigned
persistence); the flatness result rules out the former, but we retain the
attribution step in Section~\ref{sec:attribution}, which tests directly
whether the \emph{direction and magnitude} of fiber displacement across each
loop matches the analytic parallel-transport prediction, and which returns the
null result that flatness predicts.

\subsection{Loop detection}
\label{sec:loop_detection}

A \emph{loop pair} is a pair of chain iterations $(s, e)$ with $e - s \geq g$
(minimum gap $g$) such that the base-space distance
$\norm{\bm{\theta}_e - \bm{\theta}_s}_2$ is small.
The gap $g$ ensures that loop endpoints are not simply adjacent
autocorrelated states.

\begin{algorithm}
\caption{Loop detection in base space}
\label{alg:loop_detection}
\begin{algorithmic}[1]
\Require Base-space chain $\{\bm{\theta}_t\}_{t=1}^T$, tolerance $\varepsilon$,
         minimum gap $g$, number of neighbours $k$, maximum loops $L_{\max}$
\State Standardise columns of $\{\bm{\theta}_t\}$ to unit variance.
\State If $\varepsilon = \texttt{NULL}$: set $\varepsilon$ to the 5th percentile
       of a sub-sampled pairwise distance distribution.
\State Build a $k$-nearest-neighbour index over $\{\bm{\theta}_t\}$.
\State \textbf{for} each $t = 1, \ldots, T$:
\State \quad Query the $k$ nearest neighbours of $\bm{\theta}_t$; let $\mathcal{N}_t$
       be those with distance $< \varepsilon$ and index gap $> g$.
\State \quad Record pairs $(t, s)$ for each $s \in \mathcal{N}_t$.
\State Keep the $L_{\max}$ tightest pairs (smallest $\norm{\bm{\theta}_e - \bm{\theta}_s}_2$).
\Ensure Data frame of loop pairs $(s, e, d)$ where $d = \norm{\bm{\theta}_e - \bm{\theta}_s}_2$.
\end{algorithmic}
\end{algorithm}

Loop detection is run \emph{per chain} so that independent chains are never
spliced.
Pairs from all chains are pooled for the subsequent regression.

\subsection{Transport map estimation}
\label{sec:transport_estimation}

By Remark~\ref{rem:abelian}, the transport map for the GLMM class is
diagonal with real entries: parallel transport scales each fiber coordinate
independently.
We therefore estimate $J$ per-group scalars rather than a full
$J \times J$ matrix.
Given $M$ loop pairs with fiber start values $\alpha_{s_k,j}$ and end values
$\alpha_{e_k,j}$, the weighted least squares estimate for group $j$ is
\begin{equation}
  \hat{\rho}_j
  = \frac{\sum_{k=1}^{M} w_k\, \alpha_{e_k,j}\, \alpha_{s_k,j}}
         {\sum_{k=1}^{M} w_k\, \alpha_{s_k,j}^2 + \delta},
  \label{eq:wls_H}
\end{equation}
where $w_k$ are loop weights and $\delta > 0$ is a small ridge penalty.
We write $\hat{P} = \diag(\hat{\rho}_1, \ldots, \hat{\rho}_J)$.
The \texttt{fibr} implementation retains the unrestricted $J \times J$
estimator
$\hat{P} = (E W S^\top)(S W S^\top + \delta I)^{-1}$
as an option (\texttt{structure = "full"}) for models with genuinely coupled
fibers, but for diagonal-transport models the full estimator fits $J^2$
parameters to $M$ noisy pairs and manufactures spurious off-diagonal
structure, including spurious complex eigenvalues that can be mistaken
for rotational holonomy.
All results in this paper use the diagonal estimator.

Weights are based on loop tightness: $w_k \propto \exp(-d_k / \bar{d})$,
where $d_k = \norm{\bm{\theta}_{e_k} - \bm{\theta}_{s_k}}_2$ and
$\bar{d}$ is the mean distance, so that tighter loops (better approximations to
true closed loops) contribute more.

\paragraph{Residualisation.}
Before estimating $\hat{P}$, the fiber draws are residualised against the base
draws via OLS, removing the linear base-to-fiber effect
(e.g., $\alpha_j \approx \mu$ in the centred GLMM).
This isolates the vertical fiber component where geometric holonomy lives.

\paragraph{Interpretation.}
The model $\bm{\alpha}_e \approx P \bm{\alpha}_s$ is the simplest linear
summary of loop-conditional dependence.
For a well-mixed chain the endpoints are independent draws from the same
conditional, so $\hat{\rho}_j \approx 0$; a slowly mixing chain gives
$\hat{\rho}_j$ closer to unity.
Note that geometric holonomy would also have produced $\hat{\rho}_j \neq 0$,
but with a loop-orientation- and area-dependent signature; since the
connection is flat, no such geometric contribution exists, and any nonzero
$\hat{\rho}_j$ is conditional autocorrelation.
This is what the attribution test of Section~\ref{sec:attribution} confirms.

\subsection{Per-group dependence coefficients and scalar summary}

The estimated coefficients $\hat{\rho}_1, \ldots, \hat{\rho}_J$ are read as
follows:
\begin{itemize}
  \item $\hat{\rho}_j \approx 0$: loop endpoints are approximately independent
    draws from the conditional posterior; the chain mixes faster than the
    loop time-scale and no signal survives.
  \item $\hat{\rho}_j$ substantially above 0: the fiber retains loop-conditional
    dependence at the loop time-scale.  Under flatness this is conditional
    autocorrelation; the attribution step
    (Section~\ref{sec:attribution}) and the gap profile
    (Section~\ref{sec:gap_sensitivity}) confirm there is no additional
    geometric component.
  \item $\hat{\rho}_j > 1$: indicates numerical instability or insufficient
    residualisation rather than a genuine effect.
\end{itemize}

The mean coefficient $\bar{\rho} = J^{-1}\sum_j \hat{\rho}_j$ is
our primary scalar summary, ranging from 0 (independent endpoints) toward 1
(full persistence).
The Frobenius deviation $\norm{\hat{P} - I}_F$ is also reported but is a less
discriminating summary: it equals $\sqrt{J}$ in both the $P \approx 0$ and
$P \approx I$ limits.
Because the coefficients are group-aligned, $\hat{\rho}_j$ can be compared
directly with per-group analytic quantities: the prior fraction $\pi_j$
\eqref{eq:prior_fraction} and the integrated transport prediction
(Section~\ref{sec:attribution}).

\subsection{Bootstrap uncertainty}

Uncertainty is quantified by resampling loop pairs with replacement.
For each of $B$ bootstrap replicates, equation~\eqref{eq:wls_H} is re-solved
on the resampled pairs, and the resulting factors are recorded.
Because the diagonal estimator preserves group identity, the bootstrap
distribution of each $\hat{\rho}_j$ is directly interpretable as per-group
uncertainty, displayed as intervals on the per-group transport plot.

\subsection{Gap sensitivity}
\label{sec:gap_sensitivity}

The minimum gap $g$ sets the time-scale at which loops are detected.
If $g$ is inside the integrated autocorrelation time (IACT) of the fiber
chain, $\hat{\rho}_j$ is dominated by ordinary autocorrelation; as $g$ grows
past the IACT, the autocorrelation contribution decays.
Had the connection been curved, a systematic transport contribution would
re-accumulate on every loop regardless of elapsed time and persist beyond the
IACT; the gap profile was designed to expose such a component.
Flatness implies there is none, and the profiles bear this out.
We report $\hat{\rho}_j$ as a profile over a grid of gaps
(e.g.\ $g \in \{3, 10, 25, 50\}$) rather than at a single value, alongside
the estimated IACT of the base and fiber chains.
Two components can contribute to a signal surviving at $g$ beyond the IACT:
ordinary autocorrelation (decays with gap) and residualisation leakage
(gap-independent, arising because OLS removes only the \emph{linear}
base-to-fiber dependence, while loop endpoints share the same base point and
hence the same nonlinear conditional mean).
A negative-control experiment reported in Section~\ref{sec:attribution}
separates these contributors empirically.

\subsection{Attribution: empirical transport vs.\ integrated connection}
\label{sec:attribution}

A sharper test of the geometric interpretation bypasses the transport
factors entirely.
For each detected loop, \texttt{integrate\_transport()} numerically
integrates the analytic connection \eqref{eq:transport} along the actual
chain trajectory between the loop endpoints, producing a per-loop,
per-group \emph{prediction} of the fiber displacement attributable to
parallel transport, together with the signed enclosed area of the base
trajectory.
Holonomy predicts displacements whose sign tracks loop orientation and
whose magnitude tracks the enclosed area and local curvature $F_j$;
generic autocorrelation predicts displacements uncorrelated with
orientation and area.

On the sparse GLMM benchmark analysed here, this attribution test yields a
null result.
Regressing empirical displacement on the integrated-connection prediction
gives slopes near zero for all groups (range $-0.001$ to $0.005$; one
borderline significant at $p = 0.025$), and empirical displacements show no
dependence on signed loop area in any group.

Two caveats scope this null.
\textit{Test harshness}: integrated parallel-transport predictions reach
$|\Delta\alpha| \approx 100$--$200$ against observed displacements of
$2$--$5$, because the linear-transport approximation breaks down along
long, jagged MCMC paths through high-curvature regions.
A restricted test on only the shortest, tightest loops (length $\leq 50$,
tightest distance quartile) also yields null results: 38 loops survived
the filter and per-group slopes are of order $10^{-5}$ per posterior SD.

\textit{Negative control}: for each gap $g$, we construct matched-gap
control pairs chosen to have \emph{large} base distance (above the
median of pairwise distances at that gap).
On the prior-dominated cell ($n_j = 3$, $\sigma_\text{true} = 0.5$,
high $\pi_j$), true loops give $\bar{\rho}_j \approx 0.03$--$0.07$
while controls give $\bar{\rho}_j \approx 0$: base-closure conditioning
carries signal in exactly the cell where $\pi_j$ predicts it.
On the likelihood-dominated cell ($\sigma_\text{true} = 3.0$, low $\pi_j$),
both arms give $\bar{\rho}_j \approx 0$.
Figure~\ref{fig:control_pairs} shows the full gap profile for all four
experimental cells: the contrast between true-loop $\hat{\rho}_j$ and the
near-zero control band is confined to the high-$\pi_j$ cell, in quantitative
agreement with the analytic prediction.

This null is exactly what the flatness of the connection
(Proposition~\ref{prop:flat}) requires: there is no curvature, so there is no
parallel-transport displacement for the chain to track, whether or not the
sampler follows horizontal lifts.
The geometry's chain-level signature is instead \emph{excess conditional
autocorrelation localised in prior-dominated groups}---precisely what
base-closure conditioning isolates and what the analytic $\pi_j$ predicts.
The analytic $\pi_j$ is the primary and more robust diagnostic output: it
requires no chain-based estimation, is free of residualisation leakage, and
directly identifies which groups will show elevated $\hat{\rho}_j$.
The chain-based $\hat{\rho}_j$ and its gap-sensitivity profile serve as
empirical corroboration of the $\pi_j$ prediction, confirming that the
conditional dependence is detectable in actual chains, but $\pi_j$ remains
the headline flag.
The diagnostic workflow (loop detection, estimation of $\hat{\rho}_j$, and
gap-sensitivity profiling) remains a valid probe of loop-conditional fiber
dependence; flatness tells us that what it measures is conditional
autocorrelation rather than holonomy.

\begin{figure}[ht]
  \centering
  \includegraphics[width=0.92\textwidth]{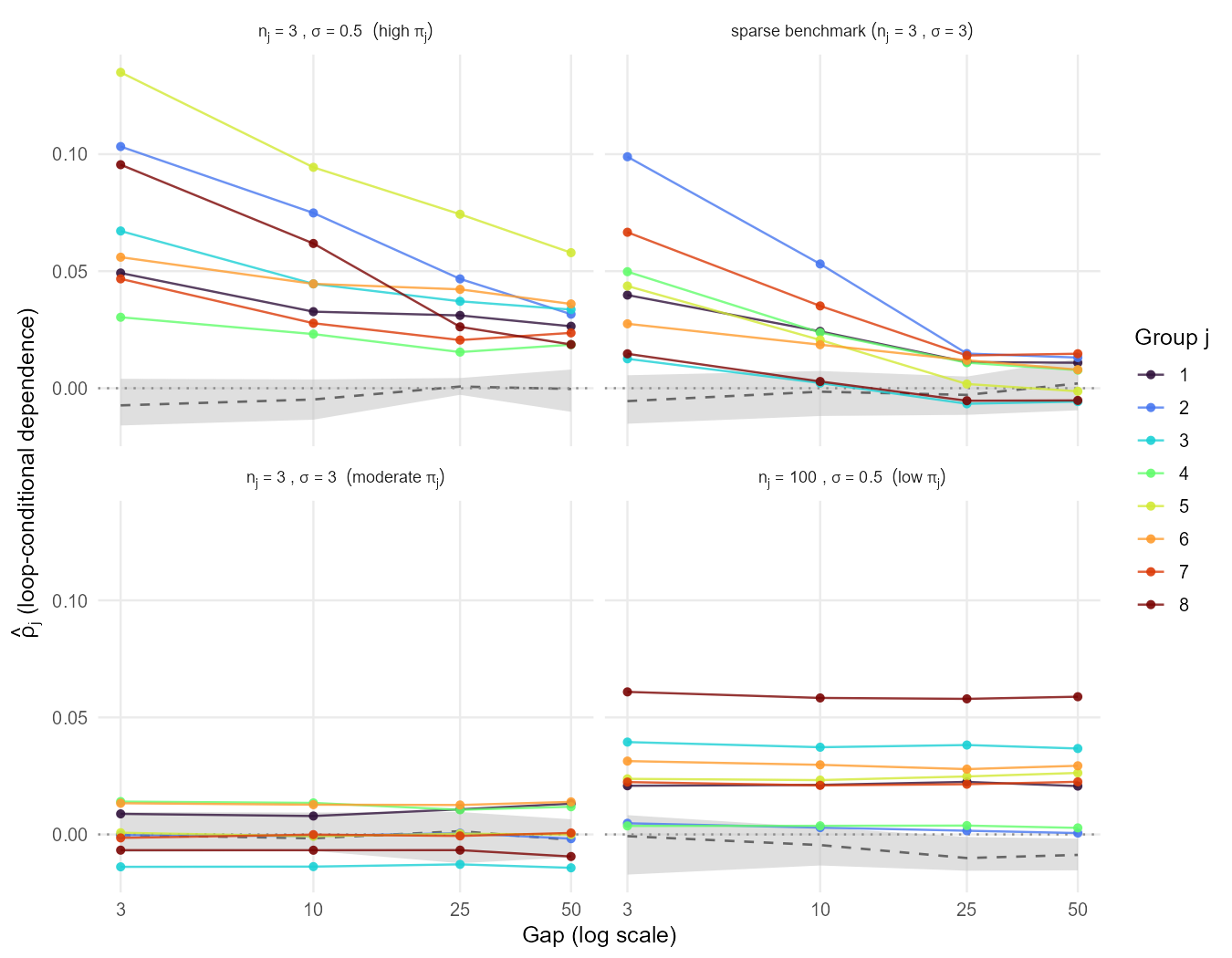}
  \caption{Negative-control experiment: true-loop $\hat{\rho}_j$ vs.\
    matched-gap control pairs, per group (colour) across the gap grid.
    \emph{Coloured lines and points}: per-group $\hat{\rho}_j$ from true
    loops (small base distance, base-closure conditioning active).
    \emph{Grey band}: full range of $\hat{\rho}_j$ from matched-gap
    control pairs with \emph{large} base distance; dashed line is the
    control mean.
    Panels are ordered from high $\pi_j$ (top left) to low $\pi_j$
    (bottom right).
    In the high-$\pi_j$ cell ($n_j=3$, $\sigma=0.5$) the true-loop
    $\hat{\rho}_j$ values are consistently above the control band across
    all gaps, confirming that base-closure conditioning carries signal
    beyond lag autocorrelation and residualisation leakage in exactly
    the cell where $\pi_j$ predicts it.
    In the moderate- and low-$\pi_j$ cells both arms are indistinguishable
    from zero.}
  \label{fig:control_pairs}
\end{figure}

\subsection{Comparison: centred vs.\ non-centred}

Figure~\ref{fig:comparison} shows the per-group dependence coefficients across
the gap grid for the centred and non-centred parameterisations of a sparse
logistic GLMM ($J = 8$ groups, $n_j = 3$ observations each,
$\sigma_\text{true} = 3$).
At the shortest gap ($g = 3$), the centred chain shows a modestly larger
mean coefficient ($\bar{\rho} \approx 0.044$) than the non-centred
chain ($\bar{\rho} \approx 0.022$).
The centred profile decays with gap (to $\approx 0.005$ at $g = 50$),
while the non-centred profile is flat across all gaps
($\approx 0.022$--$0.023$).
The gap-independence of the non-centred factors, for a parameterisation
in which $\tilde{\alpha}_j$ is conditionally independent of $(\mu,\sigma)$
by construction, is consistent with residualisation leakage rather than
geometric signal (Section~\ref{sec:attribution}).
At long gaps the centred profile falls below the non-centred level;
the short-gap excess in the centred chain ($\approx 0.044$ at $g=3$ vs
$\approx 0.005$ at $g=50$) is the autocorrelation contribution that decays
with lag, while the flat non-centred baseline ($\approx 0.022$--$0.023$
across all gaps) is the leakage floor, confirmed by the negative-control
experiment.

\begin{figure}[ht]
  \centering
  \includegraphics[width=0.95\textwidth]{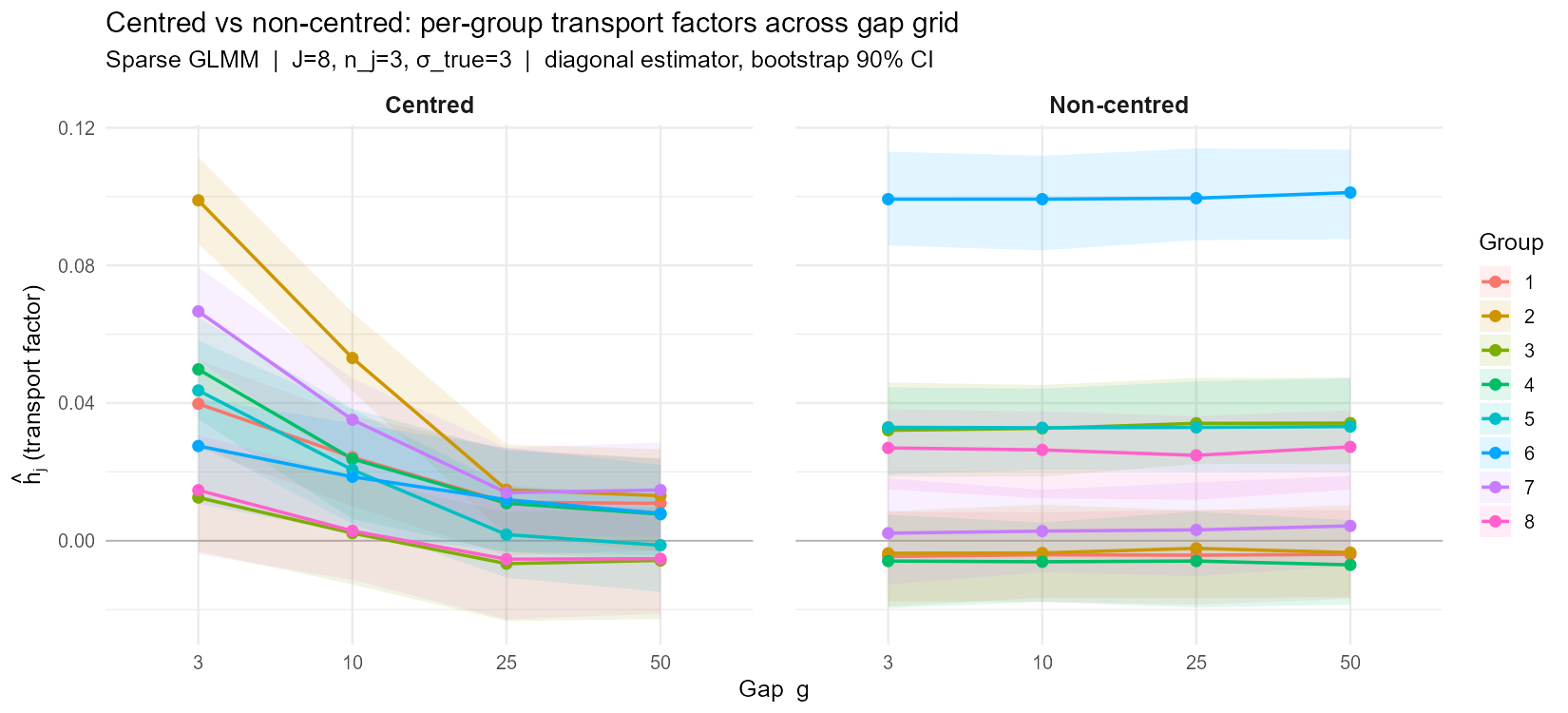}
  \caption{Per-group dependence coefficients $\hat{\rho}_j$ with bootstrap 90\%
    intervals for the centred (left) and non-centred (right)
    parameterisations of the sparse logistic GLMM ($J=8$, $n_j=3$,
    $\sigma_\text{true}=3$), shown across the gap grid
    $g \in \{3, 10, 25, 50\}$.
    The centred chain has additional short-gap signal that decays with lag
    ($\bar{\rho} \approx 0.044$ at $g=3$ falling to $\approx 0.005$ at $g=50$).
    The non-centred chain shows a flat, gap-stable baseline
    ($\bar{\rho} \approx 0.022$--$0.023$ at all gaps), consistent with
    residualisation leakage rather than geometric signal
    (Section~\ref{sec:attribution}).}
  \label{fig:comparison}
\end{figure}

\section{Analytic Connection for the Centred Logistic GLMM}
\label{sec:glmm_connection}

\subsection{Model}

We work throughout with the two-level logistic GLMM:
\begin{align}
  y_{ji} &\sim \Bern\!\left(\text{logit}^{-1}(\alpha_j + \bm{x}_{ji}^\top\bm{\beta})\right),
    \quad j \in [J],\; i = 1,\ldots,n_j, \label{eq:glmm_lik} \\
  \alpha_j &\sim \Normal(\mu, \sigma^2), \label{eq:glmm_prior} \\
  \mu &\sim \Normal(0, 25), \quad \sigma \sim \text{Exponential}(1), \quad
  \bm{\beta} \sim \Normal(\bm{0}, 4I). \label{eq:glmm_hyperprior}
\end{align}
The base space is $\Bs = \R \times \R_{>0}$ with coordinates $(\mu, \sigma)$,
and the fiber is $\R^J$ with coordinates $\bm{\alpha} = (\alpha_1,\ldots,\alpha_J)$.
We write $p_{ji} = \text{logit}^{-1}(\alpha_j + \bm{x}_{ji}^\top\bm{\beta})$
for the success probability of observation $i$ in group $j$.

\subsection{Fisher metric blocks}

We instantiate the block metric \eqref{eq:G_block} for the GLMM.
Because the logistic link is canonical, the observed information used below
equals the expected Fisher information (Section~\ref{sec:geometry}), so naming
these ``Fisher metric blocks'' is unambiguous here.

\begin{proposition}[Fisher metric blocks for the centred GLMM]
\label{prop:metric_blocks}
  The negative Hessian of the log-posterior for model
  \eqref{eq:glmm_lik}--\eqref{eq:glmm_hyperprior} has the following blocks,
  evaluated at a single parameter point $(\mu, \sigma, \bm{\alpha}, \bm{\beta})$:
  \begin{align}
    G_{FF,jj} &= \frac{1}{\sigma^2} + \sum_{i=1}^{n_j} p_{ji}(1-p_{ji}),
    \label{eq:GFF} \\
    G_{BF,j1} &= -\frac{1}{\sigma^2},
    \quad\text{(base direction: $\mu$)} \label{eq:GBF_mu} \\
    G_{BF,j2} &= -\frac{2(\alpha_j - \mu)}{\sigma^3},
    \quad\text{(base direction: $\sigma$)} \label{eq:GBF_sigma}
  \end{align}
  where $G_{FF}$ is diagonal and $G_{BF} \in \R^{J \times 2}$.
\end{proposition}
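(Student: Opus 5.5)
The plan is a direct computation: differentiate the log-posterior of \eqref{eq:glmm_lik}--\eqref{eq:glmm_hyperprior} twice, and at each step keep track of which additive term can depend on which coordinates. Up to a constant, I would write
\[
\ell = \sum_{j=1}^J\sum_{i=1}^{n_j}\bigl[y_{ji}\eta_{ji} - \log(1+e^{\eta_{ji}})\bigr]
 + \sum_{j=1}^J\Bigl[-\log\sigma - \frac{(\alpha_j-\mu)^2}{2\sigma^2}\Bigr]
 + \log p(\mu) + \log p(\sigma) + \log p(\bm\beta),
\]
where $\eta_{ji} = \alpha_j + \bm x_{ji}^\top\bm\beta$. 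I would make two structural observations first. The hyperprior terms involve no $\alpha_j$, so they do not enter $\GFF$ or $\GBF$. The coordinate $\alpha_j$ appears only in the group-$j$ likelihood terms and in the $j$-th prior term.

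For the fiber block, I would differentiate with respect to $\alpha_j$. The likelihood contributes $\sum_i (y_{ji} - p_{ji})$, using $\partial_\eta \log(1+e^\eta) = p$. The prior contributes $-(\alpha_j-\mu)/\sigma^2$. Differentiating once more in $\alpha_j$ gives $-\sum_i p_{ji}(1-p_{ji}) - 1/\sigma^2$, and negating this gives \eqref{eq:GFF}. The derivative of the $\alpha_j$-score with respect to $\alpha_k$, $k\neq j$, vanishes identically, because that score involves only $\alpha_j$, $\mu$, $\sigma$ and $\bm\beta$. This establishes that $\GFF$ is diagonal. I would also note that the result contains no $y_{ji}$, which matches the canonical-link remark in Section~\ref{sec:geometry}.

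For the coupling block, I would differentiate the same score $\partial_{\alpha_j}\ell = \sum_i(y_{ji}-p_{ji}) - (\alpha_j-\mu)/\sigma^2$ with respect to the base coordinates. The likelihood part does not depend on $(\mu,\sigma)$, which confirms the earlier claim that $\GBF$ comes only from the prior. The $\mu$-derivative of the score is $1/\sigma^2$, so negating it gives \eqref{eq:GBF_mu}. The $\sigma$-derivative is $2(\alpha_j-\mu)/\sigma^3$, so negating it gives \eqref{eq:GBF_sigma}.

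The only real hazard is sign and convention bookkeeping. First, $G$ is the \emph{negative} Hessian. Second, $\GBF$ is indexed as $J\times K$ in \eqref{eq:G_block}. Third, if $\bm\beta$ is included in the base, its column of $\GBF$, namely $\sum_i p_{ji}(1-p_{ji})\bm x_{ji}$, is omitted from the statement. I would remark that the proposition lists only the $(\mu,\sigma)$ columns, consistent with $K=2$.
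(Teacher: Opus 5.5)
Your proposal is correct and takes the same approach as the paper: differentiate the log-posterior twice directly. The off-diagonal fiber entries vanish because $\alpha_j$ enters only group $j$'s likelihood and prior terms, and the $\GBF$ entries come solely from the prior term $-(\alpha_j-\mu)^2/(2\sigma^2)$, with your explicit score computation and signs checking out. Your remark about the omitted $\bm\beta$ column is accurate but not needed for the statement.
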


\begin{proof}
  The log-posterior is
  \[
    \log p \propto
    \sum_{j \in [J]}\sum_{i=1}^{n_j} \bigl[y_{ji}\log p_{ji} + (1-y_{ji})\log(1-p_{ji})\bigr]
    - \sum_j \frac{(\alpha_j - \mu)^2}{2\sigma^2}
    - J\log\sigma - \text{priors on } \mu, \sigma, \bm{\beta}.
  \]
  Taking second derivatives: the cross term between $\alpha_j$ and $\alpha_k$
  ($j \neq k$) vanishes because observations within group $j$ do not depend on
  $\alpha_k$.
  The diagonal $\partial^2/\partial\alpha_j^2$ gives \eqref{eq:GFF}.
  Cross terms $\partial^2/\partial\alpha_j\,\partial\mu$ and
  $\partial^2/\partial\alpha_j\,\partial\sigma$ give \eqref{eq:GBF_mu}
  and \eqref{eq:GBF_sigma} respectively,
  from the prior term $-(\alpha_j - \mu)^2/(2\sigma^2)$.
\end{proof}

\begin{remark}
  The full base--base block $G_{BB}$ is not positive definite outside the
  posterior mode in the centred parameterisation.
  This is the fingerprint of the funnel: the realised geometry of the total
  space is indefinite in the $(\mu, \sigma)$ directions when the chain is in the
  neck of the funnel.
  The indefiniteness is confined to $G_{BB}$, however; the fiber block
  $G_{FF,jj} = 1/\sigma^2 + \sum_i p_{ji}(1-p_{ji})$ is strictly positive
  throughout, so the connection $\Aconn = -\GFF^{-1}\GBF$ and its
  $G$-orthogonality are globally well-defined and never see the funnel.
  This is the two-pathologies distinction already visible in the metric: the
  fiber geometry is always benign, the funnel is purely base-space.
  The SoftAbs regularisation \citep{betancourt2013} resolves the base
  indefiniteness for the RMHMC sampler (Section~\ref{sec:rmhmc_bg}).
\end{remark}

\subsection{Connection form}

From \eqref{eq:connection} and Proposition~\ref{prop:metric_blocks},
the connection coefficients are:
\begin{align}
  A_{j,\mu} &= \frac{1}{\sigma^2 G_{FF,j}},
  \label{eq:A_mu} \\
  A_{j,\sigma} &= \frac{2(\alpha_j - \mu)}{\sigma^3 G_{FF,j}}.
  \label{eq:A_sigma}
\end{align}
Both depend on the current fiber point $\alpha_j$, confirming that the
connection is genuinely non-linear.

\subsection{Curvature: linearised and full}

\begin{proposition}[Curvature of the linearised connection]
\label{prop:curvature}
  Let $\tilde\Aconn(\bm\theta) := \Aconn(\bm\theta, \bm\alpha_0)$ be the
  fiber-frozen linearisation of $\Aconn$ at a fixed fiber point
  $\bm\alpha_0$.
  Its curvature \eqref{eq:curvature}, for the centred GLMM, is
  \begin{equation}
    F_j = \frac{\partial A_{j,\sigma}}{\partial \mu}
          - \frac{\partial A_{j,\mu}}{\partial \sigma}
        = -\frac{2}{\sigma^5 G_{FF,j}^2}.
    \label{eq:curvature_glmm}
  \end{equation}
  Equivalently, $F_j$ is the exterior derivative of $\Aconn$ taken with the
  fiber held fixed.
\end{proposition}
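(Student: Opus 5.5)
The plan is a direct computation from the closed-form connection coefficients \eqref{eq:A_mu}--\eqref{eq:A_sigma}, with the fiber point $\bm\alpha_0$ (and the fixed effects $\bm\beta$) held constant throughout. The first thing to record is how the fiber block depends on the base. Writing $W_j := \sum_{i=1}^{n_j} p_{ji}(1-p_{ji})$, we have $G_{FF,j} = \sigma^{-2} + W_j$ by \eqref{eq:GFF}. Since $p_{ji}$ depends only on $\alpha_j + \bm x_{ji}^\top\bm\beta$, $W_j$ is constant once the fiber is frozen. This gives the two base derivatives
\[
\partial_\mu G_{FF,j} = 0, \qquad \partial_\sigma G_{FF,j} = -2\sigma^{-3}.
\]
This is the only place the freezing assumption enters. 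In the full connection, $W_j$ would move with $\alpha_j$ along a horizontal lift, and that is the vertical term Proposition~\ref{prop:universal_flat} cancels.

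Next I would differentiate each coefficient in turn.

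\emph{The $\mu$-derivative of $A_{j,\sigma}$.} Because $G_{FF,j}$ does not depend on $\mu$, only the numerator $2(\alpha_j-\mu)$ is differentiated, giving
\[
\partial_\mu A_{j,\sigma} = -\frac{2}{\sigma^3 G_{FF,j}}.
\]

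\emph{The $\sigma$-derivative of $A_{j,\mu}$.} Here I would rewrite the denominator as $\sigma^2 G_{FF,j} = 1 + \sigma^2 W_j$. This avoids differentiating $G_{FF,j}$ and $\sigma^2$ separately: the derivative of the denominator is simply $2\sigma W_j$, so
\[
\partial_\sigma A_{j,\mu} = -\frac{2\sigma W_j}{(\sigma^2 G_{FF,j})^2} = -\frac{2W_j}{\sigma^3 G_{FF,j}^2}.
\]

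Subtracting as in \eqref{eq:curvature}, with $\theta^1=\mu$ and $\theta^2=\sigma$, and putting both terms over the common denominator $\sigma^3 G_{FF,j}^2$ gives
\[
F_j = \frac{-2G_{FF,j} + 2W_j}{\sigma^3 G_{FF,j}^2}.
\]
The key simplification is $G_{FF,j} - W_j = \sigma^{-2}$. The likelihood contribution $W_j$ cancels exactly, leaving $F_j = -2/(\sigma^5 G_{FF,j}^2)$, which is \eqref{eq:curvature_glmm}.

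There is no real obstacle; the main risk is bookkeeping. Two points need care. The first is the sign and orientation convention: \eqref{eq:curvature} takes $\partial_{\theta^1} A_{j,2} - \partial_{\theta^2} A_{j,1}$ with the base ordered $(\mu,\sigma)$, and I would check the ordering against that definition. The second is making sure $W_j$ is genuinely treated as constant in $\sigma$. It would also be worth noting that the result is independent of $\alpha_0$ except through $W_j$, since the $(\alpha_j-\mu)$ factor disappears under $\partial_\mu$.

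The closing sentence of the proposition, that $F_j$ is the exterior derivative of $\Aconn$ with the fiber held fixed, needs no separate argument. For $K=2$, the $d\mu\wedge d\sigma$ component of $d\tilde\Aconn$ is exactly this antisymmetrised derivative. The omitted terms $\Aconn\,\partial_{\bm\alpha}\Aconn$ are the ones absent from \eqref{eq:curvature} by construction.
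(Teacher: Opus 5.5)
Your computation is correct and is essentially the paper's proof: both differentiate \eqref{eq:A_mu}--\eqref{eq:A_sigma} directly using $\partial_\mu G_{FF,j}=0$ and $\partial_\sigma G_{FF,j}=-2/\sigma^3$. The only cosmetic difference is that you write $A_{j,\mu}=1/(1+\sigma^2W_j)$ and cancel via $G_{FF,j}-W_j=\sigma^{-2}$, whereas the paper applies the product rule to $\sigma^{-2}G_{FF,j}^{-1}$ and cancels the two $-2/(\sigma^3G_{FF,j})$ terms.

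One aside in your proposal is misattributed, though it does not affect the proof. In \eqref{eq:flat} the $S_j'$ terms coming from the motion of $W_j$ cancel among themselves. The $+2/(\sigma^5G_{FF,j}^2)$ that offsets $F_j$ comes from $\partial_{\alpha_j}$ acting on the factor $(\alpha_j-\mu)$ in $A_{j,\sigma}$. So what the linearisation discards is the fiber dependence of $G_{BF}$, not of $W_j$. Indeed, for a Gaussian likelihood $W_j$ is constant, yet still $F_j\neq 0=F_j^{\mathrm{full}}$.
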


\begin{proof}
  From \eqref{eq:A_sigma}, noting that $G_{FF,j}$ does not depend on $\mu$
  (the likelihood term involves $\alpha_j$, $\bm{\beta}$, and the data, but not $\mu$):
  \[
    \frac{\partial A_{j,\sigma}}{\partial \mu}
    = \frac{-2}{\sigma^3 G_{FF,j}}.
  \]
  From \eqref{eq:A_mu}, noting that $\partial G_{FF,j}/\partial\sigma = -2/\sigma^3$
  (from the prior term $1/\sigma^2$):
  \[
    \frac{\partial A_{j,\mu}}{\partial \sigma}
    = \frac{-2}{\sigma^3 G_{FF,j}}
      + \frac{1}{\sigma^2} \cdot \frac{2/\sigma^3}{G_{FF,j}^2}
    = \frac{-2}{\sigma^3 G_{FF,j}} + \frac{2}{\sigma^5 G_{FF,j}^2}.
  \]
  Subtracting: $F_j = -2/(\sigma^5 G_{FF,j}^2)$.
\end{proof}

This linearised curvature is always negative ($F_j < 0$), and $|F_j|$ grows
as the prior dominates (small $\sigma$, few data).
Read naively, this suggests a contractive holonomy of $\tilde\Aconn$ that
worsens as the prior tightens.
But $\tilde\Aconn$ is not the connection the sampler would have to follow; the
true connection is the fiber-dependent $\Aconn$.
By Proposition~\ref{prop:universal_flat} its full curvature vanishes; for the
GLMM this can be seen explicitly, and the cancellation is instructive.

\begin{proposition}[Flatness of the GLMM connection, instance of
Proposition~\ref{prop:universal_flat}]
\label{prop:flat}
  The full curvature of the metric-orthogonal Ehresmann connection $\Aconn$ of
  \eqref{eq:A_mu}--\eqref{eq:A_sigma}, including the vertical
  (fiber-derivative) terms, is identically zero for every group $j$:
  \begin{equation}
    F_j^{\mathrm{full}}
    = \underbrace{\frac{\partial A_{j,\sigma}}{\partial \mu}
      - \frac{\partial A_{j,\mu}}{\partial \sigma}}_{=\,-2/(\sigma^5 G_{FF,j}^2)}
    + \underbrace{A_{j,\mu}\frac{\partial A_{j,\sigma}}{\partial \alpha_j}
      - A_{j,\sigma}\frac{\partial A_{j,\mu}}{\partial \alpha_j}}_{=\,+2/(\sigma^5 G_{FF,j}^2)}
    = 0 .
    \label{eq:flat}
  \end{equation}
  Consequently the horizontal distribution is integrable (Frobenius), and
  since the base $\Bs = \R \times \R_{>0}$ is simply connected, the holonomy of
  every loop is trivial: a global flat trivialisation exists.
\end{proposition}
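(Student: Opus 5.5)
The plan is to verify \eqref{eq:flat} by direct computation, and then get integrability and trivial holonomy either from Frobenius or, more cleanly, from Proposition~\ref{prop:universal_flat}.

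First I would set up the curvature properly. The horizontal lifts of the base coordinate fields are $X_\mu = \partial_\mu + \sum_j A_{j,\mu}\partial_{\alpha_j}$ and $X_\sigma = \partial_\sigma + \sum_j A_{j,\sigma}\partial_{\alpha_j}$. Throughout I hold $\bm\beta$ fixed as part of the base, and ignore it since the statement concerns the $(\mu,\sigma)$ loop. The full curvature is the vertical part of $[X_\mu, X_\sigma]$. Because $\GFF$ is diagonal, $A_{j,\cdot}$ depends on the fiber only through $\alpha_j$. The bracket therefore decouples group by group, and its $\partial_{\alpha_j}$ component is exactly the expression in \eqref{eq:flat}.

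The base-derivative piece is already given by Proposition~\ref{prop:curvature}, so only the vertical terms remain. Write $G = G_{FF,j}$ and $c = \partial G/\partial\alpha_j = \sum_i p_{ji}(1-p_{ji})(1-2p_{ji})$. From \eqref{eq:A_mu}--\eqref{eq:A_sigma} the fiber derivatives are
\[
\partial_{\alpha_j}A_{j,\mu} = -\frac{c}{\sigma^2G^2}, \qquad \partial_{\alpha_j}A_{j,\sigma} = \frac{2}{\sigma^3G} - \frac{2(\alpha_j-\mu)c}{\sigma^3G^2}.
\]
Substituting gives
\[
A_{j,\mu}\,\partial_{\alpha_j}A_{j,\sigma} = \frac{2}{\sigma^5G^2} - \frac{2(\alpha_j-\mu)c}{\sigma^5G^3}, \qquad A_{j,\sigma}\,\partial_{\alpha_j}A_{j,\mu} = -\frac{2(\alpha_j-\mu)c}{\sigma^5G^3}.
\]
The likelihood-curvature terms involving $c$ cancel in the difference. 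The difference is therefore exactly $+2/(\sigma^5G^2)$, which offsets $F_j$.

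I expect the main obstacle to be bookkeeping rather than any idea. The sign and ordering convention of the vertical terms must match the convention used for $F_j$ in \eqref{eq:curvature}, and the cancellation of the $c$-terms is the only non-obvious step. I would cross-check the result against Proposition~\ref{prop:universal_flat} with the explicit first integral
\[
s_j = -(\alpha_j-\mu)/\sigma^2 + \sum_i (y_{ji}-p_{ji}).
\]
Along $X_\mu$ its derivative is $1/\sigma^2 - G\,A_{j,\mu} = 0$. Along $X_\sigma$ it is $2(\alpha_j-\mu)/\sigma^3 - G\,A_{j,\sigma} = 0$. Both checks confirm the computation independently.

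For the consequences, Frobenius gives integrability. For trivial holonomy I would avoid any completeness argument for the transport ODE \eqref{eq:transport} and argue as follows. Parallel transport preserves each $s_j$. The map $\alpha_j \mapsto s_j$ has derivative $-G < 0$ and tends to $\mp\infty$ as $\alpha_j \to \pm\infty$, so it is a diffeomorphism of $\R$. Hence $(\mu,\sigma,\bm\alpha) \mapsto (\mu,\sigma,\bm s)$ is a global flat trivialisation. Any transported curve returns to the unique $\alpha_j$ with the original score value when the base returns to $\bm\theta_0$, so $\Delta\bm\alpha = 0$. This holds on the simply connected base, and in fact for every loop.
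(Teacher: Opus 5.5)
Your computation of the vertical pair in \eqref{eq:flat} is the paper's proof: the same fiber derivative $c=S_j'$, the same two products, and the same cancellation of the $c$-terms, leaving $+2/(\sigma^5 G_{FF,j}^2)$. Your Lie-bracket set-up also confirms that the sign and ordering of that pair match \eqref{eq:curvature}.

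Where you genuinely differ is in the consequences. The paper stops once the curvature vanishes and then appeals to Frobenius and the simple connectivity of $\Bs$. You instead instantiate Proposition~\ref{prop:universal_flat} with the explicit score $s_j$. You show that $\alpha_j\mapsto s_j$ is a diffeomorphism of $\R$, because its derivative is $-G_{FF,j}<0$ and the likelihood term is bounded by $n_j$.

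Your route buys two things the paper's one-line step leaves implicit. First, it does not need simple connectivity: holonomy is trivial for every loop. Second, it settles existence of the transport over the whole loop. The constant-score curve $t\mapsto(\gamma(t),\alpha(t))$ is smooth by the implicit function theorem. It is horizontal because the horizontal distribution equals $\ker d\bm s$. So by uniqueness for \eqref{eq:transport} it \emph{is} the transport.

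The inference ``flat and simply connected, hence trivial holonomy and a global trivialisation'' tacitly needs this completeness for a nonlinear connection with non-compact fibers. Here completeness does hold, since $A_{j,\mu}\le 1$ and $|A_{j,\sigma}|\le 2|\alpha_j-\mu|/\sigma$ grow at most linearly, but the paper never says so. You should state the existence step explicitly, because your sentence ``any transported curve returns \ldots'' presupposes that the transported curve exists.

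The paper's route is shorter and keeps the explicit cancellation as the content of Proposition~\ref{prop:flat}. With your score check in hand, the bracket computation becomes a corroboration rather than the step that carries the proof.
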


\begin{proof}
  The first pair is Proposition~\ref{prop:curvature}.
  For the vertical pair, write $S_j = \sum_i p_{ji}(1-p_{ji})$ so that
  $G_{FF,j} = 1/\sigma^2 + S_j$ and
  $\partial G_{FF,j}/\partial\alpha_j = S_j' = \sum_i p_{ji}(1-p_{ji})(1-2p_{ji})$.
  Then
  \[
    A_{j,\mu}\frac{\partial A_{j,\sigma}}{\partial \alpha_j}
    = \frac{1}{\sigma^2 G_{FF,j}}\!\left[\frac{2}{\sigma^3 G_{FF,j}}
      - \frac{2(\alpha_j-\mu)S_j'}{\sigma^3 G_{FF,j}^2}\right]
    = \frac{2}{\sigma^5 G_{FF,j}^2}
      - \frac{2(\alpha_j-\mu)S_j'}{\sigma^5 G_{FF,j}^3},
  \]
  \[
    A_{j,\sigma}\frac{\partial A_{j,\mu}}{\partial \alpha_j}
    = \frac{2(\alpha_j-\mu)}{\sigma^3 G_{FF,j}}
      \!\left[-\frac{S_j'}{\sigma^2 G_{FF,j}^2}\right]
    = -\frac{2(\alpha_j-\mu)S_j'}{\sigma^5 G_{FF,j}^3}.
  \]
  The $S_j'$ terms cancel in the difference, leaving
  $+2/(\sigma^5 G_{FF,j}^2)$, which is exactly the negative of the first pair.
  Hence $F_j^{\mathrm{full}} = 0$.
\end{proof}

\begin{remark}[Why the linearised curvature is nonzero]
\label{rem:leading_order}
  The discrepancy between $F_j \neq 0$ and $F_j^{\mathrm{full}} = 0$ is the
  vertical variation of $\Aconn$ along the transported fiber.
  Freezing the fiber (equivalently, holding $G_{FF,j}$ at its value at the loop
  centre) discards the second pair in \eqref{eq:flat} and recovers the nonzero
  $F_j$.
  This is precisely what the synthetic-loop validation of
  Section~\ref{sec:stokes_validation} does, so the ``holonomy'' it reports is
  the holonomy of the linearised connection, not of the true one.
  We confirm the result numerically in
  \texttt{data-raw/verify\_flat\_connection.R}: integrating the true transport
  ODE around closed loops returns displacement of order $10^{-14}$, while
  freezing $G_{FF,j}$ reproduces the linearised holonomy of
  Figure~\ref{fig:stokes}.
  Both $F_j$ and $F_j^{\mathrm{full}}$ are curvatures of the Ehresmann
  connection (iii) of Section~\ref{sec:which_connection}; neither is the
  Riemann curvature of the Fisher--Rao metric, which is hyperbolic and nonzero
  and which the flatness of $\Aconn$ does not affect.
\end{remark}

\begin{remark}[The structure group is abelian: no rotational holonomy]
\label{rem:abelian}
  Because $G_{FF}$ is diagonal and the rows of $G_{BF}$ decouple across
  groups, parallel transport acts on each fiber coordinate independently: any
  loop's transport is a per-group scaling $\alpha_j \mapsto h_j \alpha_j$ with
  $h_j \in \R_{>0}$ (and, by Proposition~\ref{prop:flat}, $h_j = 1$ for the
  true connection).
  The structure group of this bundle is abelian, and genuinely
  \emph{rotational} holonomy (complex transport eigenvalues) is impossible
  for this model class.
  This has a practical consequence for estimation:
  the diagnostic's dependence matrix $\hat{P}$ is diagonal with real entries,
  so it should estimate $J$ scalars rather than a full $J \times J$ matrix
  (Section~\ref{sec:transport_estimation}).
\end{remark}

\subsection{Prior fraction}
\label{sec:prior_fraction}

The ratio of the prior precision to the total fiber precision
\begin{equation}
  \pi_j = \frac{1/\sigma^2}{G_{FF,j}}
         = \frac{1/\sigma^2}{1/\sigma^2 + \sum_{i=1}^{n_j}p_{ji}(1-p_{ji})}
  \label{eq:prior_fraction}
\end{equation}
takes values in $(0,1)$ and measures how much of the fiber metric at group $j$
is attributable to the prior.
When $\pi_j \to 1$ the prior dominates (sparse data or large $\sigma$);
when $\pi_j \to 0$ the likelihood dominates.
This is exactly the \emph{pooling factor} $\omega$ of \citet{gelman2006pooling},
the fraction of a group's posterior precision contributed by the prior: their
$\omega = 1 - \sigma_\alpha^2/(\sigma_\alpha^2 + \sigma_y^2)$ is
$(1/\sigma_\alpha^2)/(1/\sigma_\alpha^2 + n_j/\sigma_y^2)$, which is
Equation~\eqref{eq:prior_fraction} for the Gaussian case. It is thus a classical quantity, not a new one:
recoverable by hand from a fitted multilevel model's variance components and
per-group information, and reported in related (population-level) form as the
intraclass correlation by standard tools. What we add is its geometric derivation
as the residual obstruction once the connection is flat, and its per-coordinate
computation; the package simply returns it. (We use ``pooling
factor'' rather than ``shrinkage factor'' deliberately: some authors call the
complement $1-\pi_j$ the shrinkage factor, a usage \citeauthor{gelman2006pooling}
flag as confusing, since a shrinkage factor of zero then means complete pooling.)
It is also the per-group form of the prior/likelihood balance that
\citet{betancourt2015} tied to the optimal parameterisation.
In plain terms, $\pi_j$ is the share of what the posterior knows about group $j$
that comes from the prior rather than from that group's own data: near 1, the
data barely constrain the group and a non-centred parameterisation helps; near
0, the group is well identified and centring is fine.
What the bundle derivation adds locally is the observation that $\pi_j$ is
determined by prior \emph{precision} $1/\sigma^2$ rather than prior
\emph{variance} --- the source of the non-monotone behaviour in $\sigma$
documented in Section~\ref{sec:simulation}.
Substituting into the linearised curvature \eqref{eq:curvature_glmm} gives
$|F_j| = 2\pi_j^2/\sigma$: even the spurious linearised curvature is controlled
by $\pi_j$, so the diagnostic and the analytic flag agree on where any signal
must concentrate.

\begin{remark}
  The prior fraction $\pi_j$ is computed analytically from a subsample of
  posterior draws and serves as a diagnostic in its own right:
  groups with $\pi_j > 0.6$ are good candidates for non-centred
  reparameterisation (Section~\ref{sec:reparam}).
  This is the geometric basis for the per-group parameterisation advice
  implemented in \texttt{fibr}.
\end{remark}

\subsection{Synthetic loop validation}
\label{sec:stokes_validation}

We check the \emph{linearised} curvature formula \eqref{eq:curvature_glmm}
against Stokes' theorem \eqref{eq:stokes} using synthetic circular loops in
$(\mu, \sigma)$ space, bypassing any MCMC noise.
We stress that this validates the linearised connection only: the integration
below holds the fiber fixed at $\bm{\alpha}_0$ (equivalently, freezes
$G_{FF}$), which is the linearisation whose curvature is nonzero.
Integrating the \emph{true} connection, with $G_{FF}$ updated as the fiber
moves, returns zero displacement to integrator precision, consistent with
Proposition~\ref{prop:flat}.

For a circle of radius $r$ centred at the posterior mean
$(\mu_0, \sigma_0, \bm{\alpha}_0, \bm{\beta}_0)$:
\begin{enumerate}
  \item Numerically integrate the ODE \eqref{eq:transport} around the circle
    (discretised into $n_s = 400$ steps) to obtain $\Delta\alpha_j^{\text{num}}$.
  \item Compute the first-order Stokes prediction
    $\Delta\alpha_j^{\text{Stokes}} = F_j(\mu_0,\sigma_0,\bm{\alpha}_0) \cdot \pi r^2$.
\end{enumerate}

Figure~\ref{fig:stokes} shows the comparison for radii $r \in [0.02, 0.30]$
(approximately $0.02$--$0.30$ posterior standard deviations), displayed
separately for each of the $J = 8$ groups.
The numerically integrated linearised transport and its first-order Stokes
prediction agree closely for $r \leq 0.20$ across all groups, and diverge for
larger radii where the quadratic Stokes term becomes relevant.
The divergence is not uniform: groups with higher $|F_j|$ depart from the
linear Stokes prediction at smaller $r$, while groups with near-flat linearised
curvature remain in agreement throughout the range.
This agreement validates the internal consistency of the connection
\eqref{eq:A_mu}--\eqref{eq:A_sigma} and the linearised curvature
\eqref{eq:curvature_glmm}.
It is \emph{not} evidence of geometric holonomy: by Proposition~\ref{prop:flat}
the true holonomy is zero, and Figure~\ref{fig:stokes} reports the linearised
object that the freezing of $\bm{\alpha}_0$ creates.

\begin{figure}[ht]
  \centering
  \includegraphics[width=0.75\textwidth]{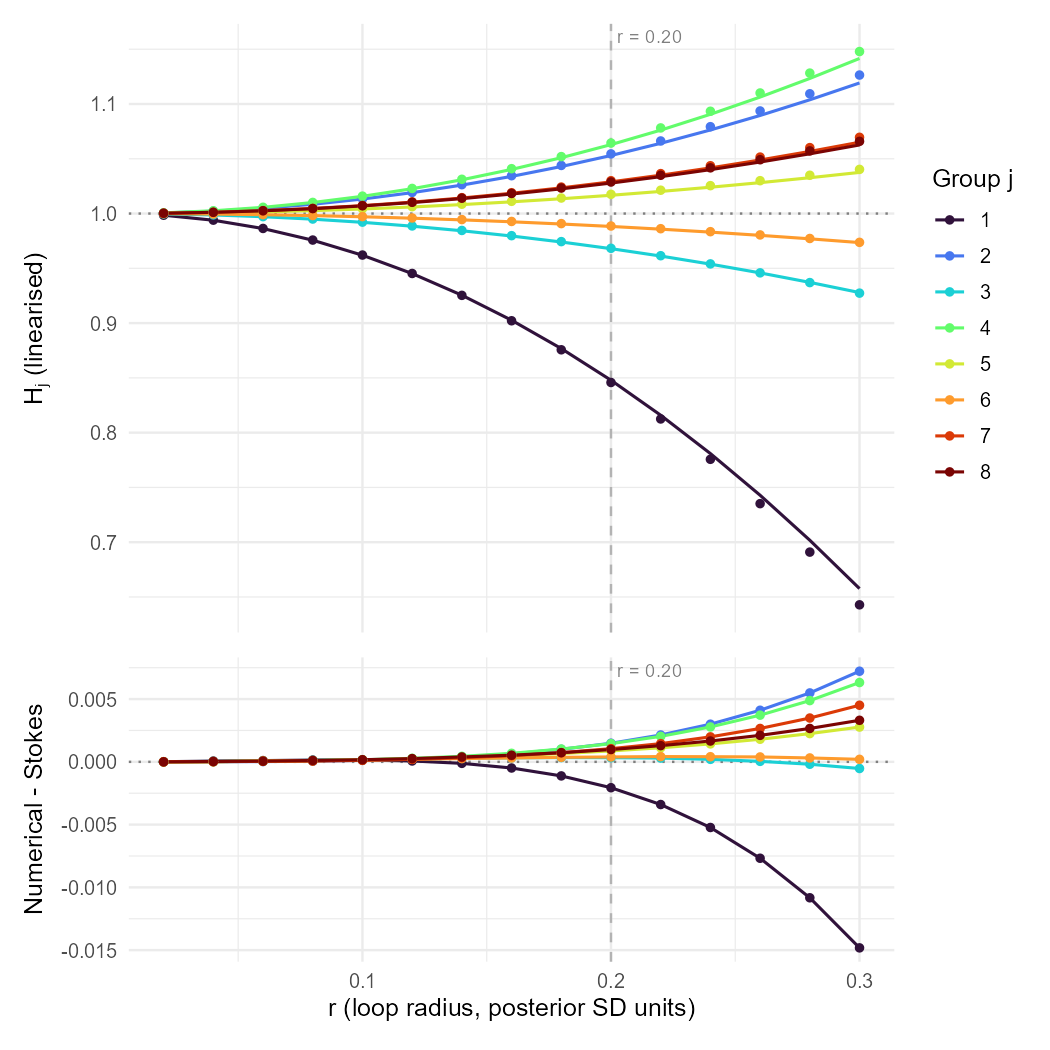}
  \caption{\emph{Linearised} holonomy $H_j$ vs.\ loop radius $r$ for the
    centred GLMM, shown per group ($j = 1, \ldots, 8$, colour).
    \emph{Top}: lines are the first-order Stokes prediction
    $H_j = 1 + F_j \cdot \pi r^2 / \alpha_{0j}$; points are numerical
    integration of the transport ODE with the fiber held fixed at
    $\bm{\alpha}_0$ (i.e.\ $G_{FF}$ frozen).
    Both objects are the \emph{linearised} holonomy: with $G_{FF}$ updated
    along the fiber the displacement is zero (Proposition~\ref{prop:flat}).
    \emph{Bottom}: residuals (numerical $-$ Stokes) reveal the structure of
    the agreement.  Residuals are near zero for $r \leq 0.20$ across all
    groups, confirming internal consistency of the linearised curvature
    formula; the negative departure at larger $r$ is the quadratic Stokes
    correction term, sharpest for groups with high $|F_j|$.
    The vertical dashed line marks $r = 0.20$.}
  \label{fig:stokes}
\end{figure}

\section{Simulation Study}
\label{sec:simulation}

\subsection{Design}

We ran a $4 \times 4$ factorial simulation study varying observations per group
$n_j \in \{3, 10, 30, 100\}$ and true hierarchical standard deviation
$\sigma_\text{true} \in \{0.5, 1.0, 2.0, 3.0\}$, with $J = 8$ groups and
10 replicates per cell (160 Stan fits in total).
For each replicate: data were simulated from model
\eqref{eq:glmm_lik}--\eqref{eq:glmm_hyperprior} with
$\bm{\beta}_\text{true} = (0.8, -0.5)$, $\mu_\text{true} = 0$;
the centred Stan model was fitted with 4 chains $\times$ 2{,}000 post-warmup
iterations; the dependence diagnostic (diagonal estimator) was evaluated at
each gap in the grid $g \in \{3, 10, 25, 50\}$, alongside the estimated
IACT of the base and fiber chains;
and the analytic prior fraction was computed from 200 subsampled draws via
\texttt{compute\_connection()}.

The primary scalar is the mean dependence coefficient
$\bar{\rho} = J^{-1}\sum_j \hat{\rho}_j$ at each gap.
As established in Section~\ref{sec:diagnostic}, this is near 0 when loop
endpoints are independent (well-mixed at the loop time-scale) and grows
toward 1 with loop-conditional persistence.
The gap profile separates the two contributors that flatness leaves: ordinary
autocorrelation, which decays once $g$ exceeds the IACT, and residualisation
leakage, which is gap-independent (Section~\ref{sec:attribution}).
With flatness established analytically, the simulation is not a search for a
geometric signal but a check of two things: that the conditional dependence
$\pi_j$ predicts is detectable and correctly localised in finite chains, and
that it is geometrically separable from the funnel.

\subsection{Results}

The headline of the study is the geometric separation of the two pathologies
(the non-monotone $\sigma$ behaviour below); the theory--empirical agreement
confirms that the predicted dependence is detectable in finite chains, but it
is the weaker, expected half of the picture.

\paragraph{Heatmap.}
Figure~\ref{fig:heatmap} shows the median $\bar{\rho}$ over 10 replicates.
The signal is strongest in the upper-left corner ($n_j = 3$,
$\sigma_\text{true} = 0.5$--$1.0$) and decreases monotonically with $n_j$
at every $\sigma_\text{true}$ value.

\paragraph{The $\sigma$-direction is non-monotone and illuminating.}
Within the $n_j = 3$ row, $\bar{\rho}$ is highest at
$\sigma_\text{true} = 0.5$--$1.0$ (0.040 and 0.036 at the headline gap $g = 50$)
and drops sharply for $\sigma_\text{true} \in \{2.0, 3.0\}$ (0.013, 0.015).
This is not a failure of the diagnostic but a genuine distinction:
the prior fraction $\pi_j = (1/\sigma^2)/G_{FF,j}$ is determined by the ratio
of prior \emph{precision} to total precision.
Increasing $\sigma$ reduces $1/\sigma^2$, which \emph{reduces} the prior fraction
(Table~\ref{tab:summary}) even as it worsens the funnel.
The funnel pathology (large $\sigma$, slow mixing in $\sigma$) and the
coupling pathology (large $\pi_j$, strong base-conditional fiber dependence)
are related but distinct: the former is a problem of base-space geometry; the
latter is a problem of the base--fiber coupling.
Standard diagnostics (divergences, low ESS) detect the former;
the per-group $\hat{\rho}_j$ and analytic $\pi_j$ from \texttt{fibr} localise
the coupling pathology to specific groups, its chain-level signature being
structured conditional autocorrelation, per Section~\ref{sec:attribution}.

\begin{figure}[ht]
  \centering
  \includegraphics[width=0.80\textwidth]{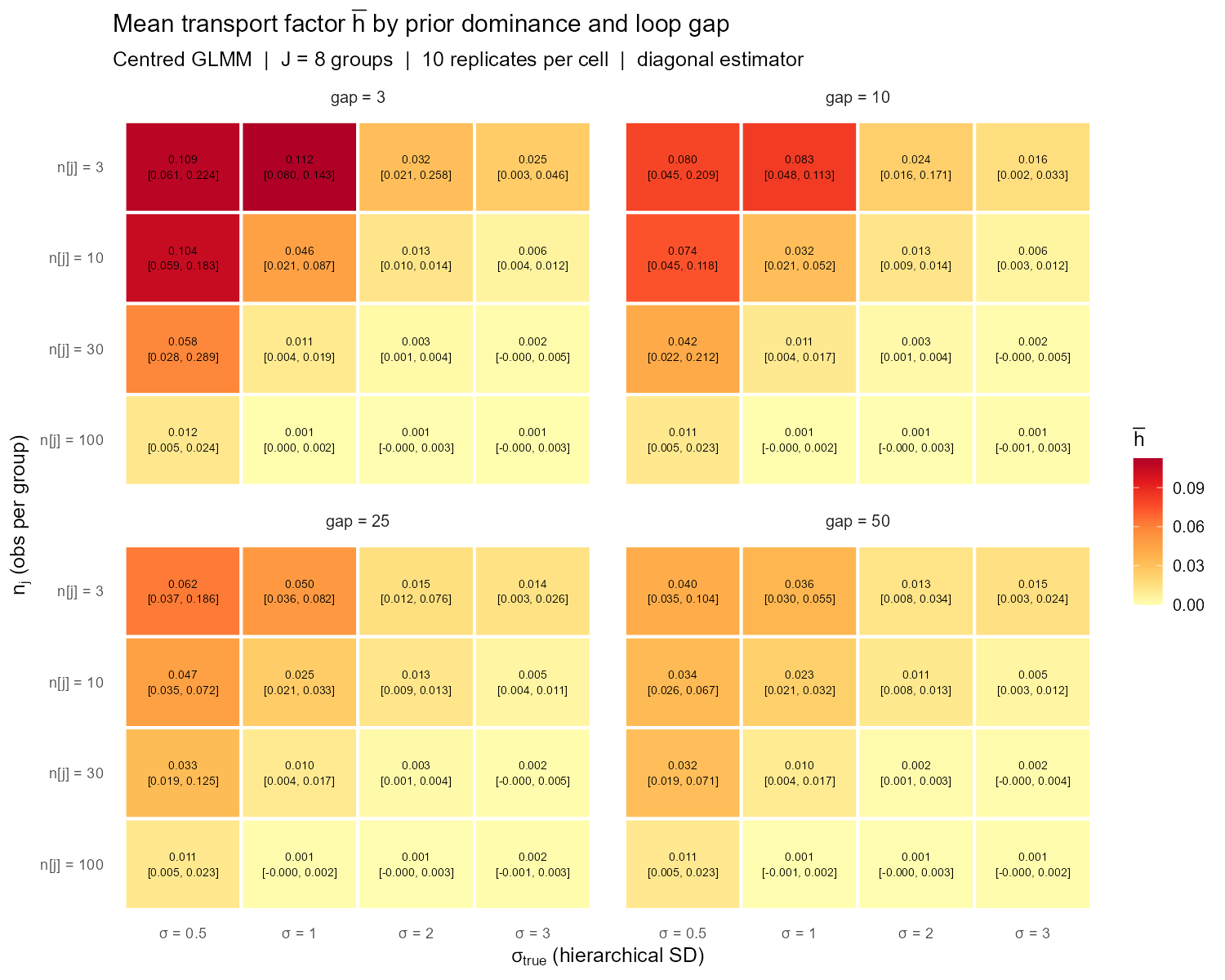}
  \caption{Simulation study: median mean dependence coefficient
    $\bar{\rho}$ over 10 replicates, by gap $g$.
    Tile labels show median [Q25, Q75].
    The signal (dark tiles) is concentrated in the prior-dominated
    upper-left corner ($n_j$ small, $\sigma_\text{true}$ small to moderate).
    The $\sigma$-axis is non-monotone: the coupling signal is driven by prior
    \emph{precision} $1/\sigma^2$, not by $\sigma$ itself.}
  \label{fig:heatmap}
\end{figure}

\paragraph{Theory vs.\ empirical.}
Figure~\ref{fig:theory_vs_empirical} plots the mean prior fraction $\bar{\pi}$
(analytic) against $\bar{\rho}$ (diagnostic), faceted by gap, for all 640
cell-replicate--gap combinations.
The LOESS trend is positive, consistent with the analytic prior fraction
being a leading predictor of the empirical signal.
Given flatness, we read this straightforwardly: prior dominance slows mixing,
so $\pi_j$ predicts the strength of the conditional autocorrelation that the
diagnostic measures, with no geometric component to disentangle.
The attribution test of Section~\ref{sec:attribution} yielded the
corresponding null result; this figure establishes that the analytic quantity
and the empirical diagnostic move together across the design.
Scatter around the trend reflects simulation variability in the data and chain.

\begin{figure}[ht]
  \centering
  \includegraphics[width=0.80\textwidth]{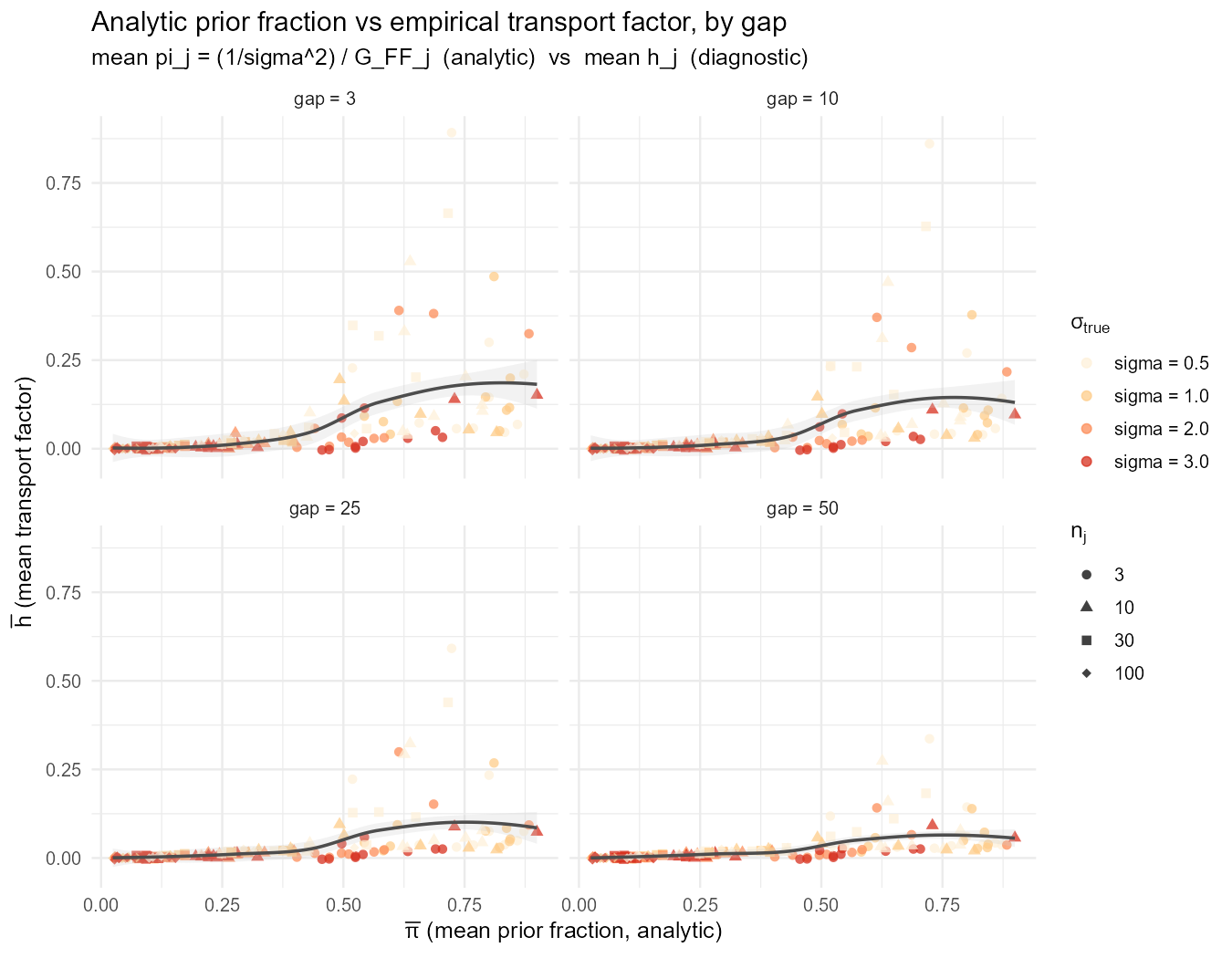}
  \caption{Analytic prior fraction $\bar{\pi}$ (x-axis) against empirical
    mean dependence coefficient $\bar{\rho}$ (y-axis), faceted by loop gap $g$.
    Each panel contains all 160 cell-replicates at that gap.
    Colour encodes $\sigma_\text{true}$; shape encodes $n_j$.
    The positive LOESS trend (grey band: 95\% pointwise CI) is consistent
    with the prediction that a larger prior fraction implies stronger
    conditional dependence.  The trend persists at all gaps but attenuates as
    $g$ increases, reflecting the decay of the autocorrelation contribution.}
  \label{fig:theory_vs_empirical}
\end{figure}

\begin{figure}[ht]
  \centering
  \includegraphics[width=0.80\textwidth]{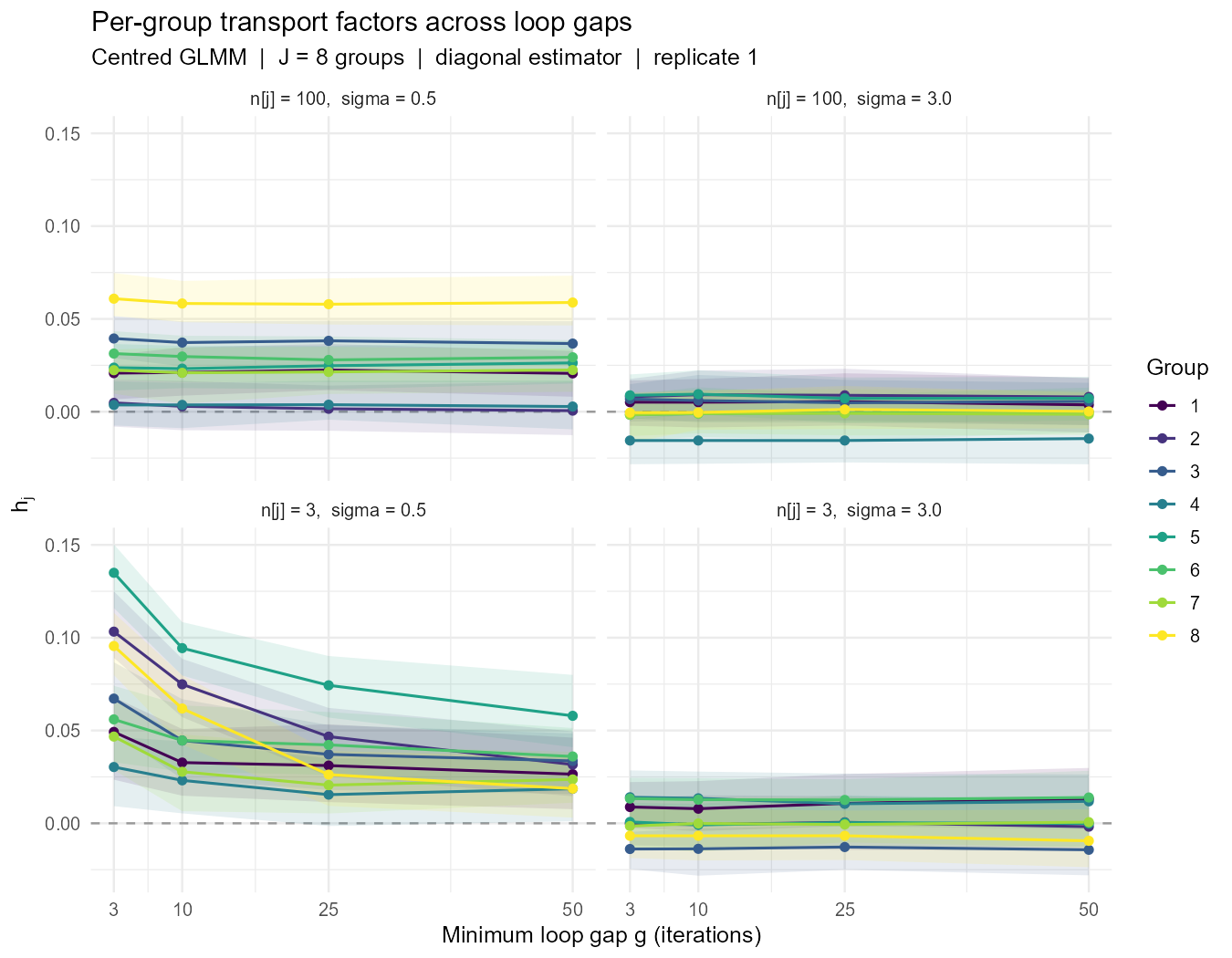}
  \caption{Per-group dependence coefficients $\rho_j$ as a function of the
    minimum loop gap $g \in \{3, 10, 25, 50\}$ for four corner cells
    (replicate 1).
    Shaded bands show 90\% bootstrap intervals.
    In the prior-dominated cell ($n_j = 3$, $\sigma_\text{true} = 0.5$,
    bottom left), $\rho_j$ decays monotonically from gap 3 to gap 50,
    consistent with an autocorrelation-dominated signal.
    In data-dominated cells ($n_j = 100$, top row), the coefficients are near
    zero for $\sigma_\text{true} \geq 1.0$; one group in the
    $n_j = 100$, $\sigma_\text{true} = 0.5$ panel shows a gap-independent
    $\rho_j \approx 0.06$, a candidate for residualisation leakage
    (OLS removes only the linear base-to-fiber dependence; nonlinear
    coupling at loop endpoints sharing the same base point can persist).
    The negative-control experiment of Section~\ref{sec:attribution}
    separates leakage from base-closure signal.
    The bottom-right cell ($n_j = 3$, $\sigma_\text{true} = 3.0$) also shows
    near-zero factors: even with sparse data, a wide prior reduces
    $1/\sigma^2$ and hence $\pi_j$, suppressing the signal.
    This illustrates that the coupling signal is driven by prior
    \emph{precision}, not by prior \emph{variance}.}
  \label{fig:gap_profiles}
\end{figure}

\begin{table}[ht]
  \centering
  \caption{Simulation study cell-level summary (median over 10 replicates,
    at headline gap $g = 50$).
    $\bar{\rho}$: mean dependence coefficient; $\rho_{\max}$: largest per-group
    coefficient; $\bar\pi$: mean prior fraction (analytic); \% Div: percentage
    of runs with at least one divergence.}
  \label{tab:summary}
  \small
  \begin{tabular}{rrcccr}
    \toprule
    $n_j$ & $\sigma_\text{true}$ & $\bar{\rho}$ & $\rho_{\max}$ & $\bar\pi$ & \% Div \\
    \midrule
    3 & 0.5 & 0.040 & 0.090 & 0.784 & 100 \\
    3 & 1.0 & 0.036 & 0.070 & 0.802 & 100 \\
    3 & 2.0 & 0.013 & 0.028 & 0.537 & 100 \\
    3 & 3.0 & 0.015 & 0.033 & 0.533 & 100 \\
    \midrule
    10 & 0.5 & 0.034 & 0.057 & 0.635 & 100 \\
    10 & 1.0 & 0.023 & 0.044 & 0.444 & 100 \\
    10 & 2.0 & 0.011 & 0.026 & 0.276 &  90 \\
    10 & 3.0 & 0.005 & 0.016 & 0.250 &  40 \\
    \midrule
    30 & 0.5 & 0.032 & 0.062 & 0.474 & 100 \\
    30 & 1.0 & 0.010 & 0.022 & 0.166 &  50 \\
    30 & 2.0 & 0.002 & 0.016 & 0.113 &   0 \\
    30 & 3.0 & 0.002 & 0.015 & 0.094 &   0 \\
    \midrule
    100 & 0.5 & 0.011 & 0.034 & 0.172 &  50 \\
    100 & 1.0 & 0.001 & 0.010 & 0.051 &  10 \\
    100 & 2.0 & 0.001 & 0.013 & 0.029 &   0 \\
    100 & 3.0 & 0.001 & 0.013 & 0.037 &   0 \\
    \bottomrule
  \end{tabular}
\end{table}

\section{What to Do About It: Guidance for Practitioners}
\label{sec:guidance}

The diagnostic identifies \emph{which} groups suffer a geometric obstruction
and \emph{why}; this section addresses what a practitioner should do next.
The good news is that the indicated remedies are well-established methods
that require no new sampler: the geometric analysis tells us \emph{when}
and \emph{where} to apply them.

\subsection{Per-group (partial) non-centering, guided by the prior fraction}
\label{sec:guidance_pnc}

The prior fraction $\pi_j$ \eqref{eq:prior_fraction} is precisely the
quantity that determines the optimal parameterisation of group $j$
\citep{papaspiliopoulos2003, papaspiliopoulos2007}.
Groups with $\pi_j$ near 1 (prior-dominated) should be non-centred; groups
with $\pi_j$ near 0 (data-dominated) should remain centred; intermediate
groups benefit from \emph{partial} non-centering,
\[
  \alpha_j = \mu + \sigma^{w_j}\,\tilde{\alpha}_j^{(w_j)},
  \qquad w_j \in [0, 1],
\]
which interpolates between the centred ($w_j = 0$) and non-centred
($w_j = 1$) coordinates.
For the Gaussian case the optimal partial non-centring weight is classical and
closed-form \citep{papaspiliopoulos2003}; for non-Gaussian models
\citet{gorinova2020} instead learn $w_j$ by stochastic variational optimisation
(the VIP algorithm). The geometric analysis of
Section~\ref{sec:glmm_connection} supplies the GLMM analogue in closed form:
$w_j \approx \pi_j$ evaluated at the posterior mean, recovering the variational
weight from a pilot run rather than an optimisation.
This is the primary practical payoff of the framework: a cheap, analytic,
per-group parameterisation rule with a clear justification.
We tested this rule directly on a mixed design.
Table~\ref{tab:mixed_design} reports median min-ESS (over all parameters,
$J = 8$ groups, 10 replicates) for the centred, half ($w_j = 0.5$),
non-centred, and $w_j = \pi_j$ parameterisations, across uniform-sparse, mixed,
and uniform-dense designs at $\sigma \in \{1, 2\}$.
Two findings emerge, both consistent with the two-pathologies view
(Section~\ref{sec:discussion}).

\emph{(i)}~On min-ESS the non-centred parameterisation wins in every cell.
The minimum is taken over all parameters and is dominated by $\sigma$, whose
centred geometry retains a funnel that non-centering removes regardless of group
size; in the range tested ($\sigma \in \{1,2\}$, $n_j \in \{3,50\}$)
non-centering is never harmful enough elsewhere to forfeit this advantage.
The $\pi_j$-adaptive rule interpolates in the right direction: at $\sigma = 1$
sparse, $\pi_j \approx 0.8$, so its weights are near-non-centred and it reaches
min-ESS $2381$, far above centred ($187$) and half ($1071$), but it does not
match full non-centring ($3327$), because the residual centering left by
$w_j < 1$ leaves some $\sigma$-funnel.
We therefore do not claim $w_j = \pi_j$ \emph{beats} both uniform choices: in
this regime it is a safe interpolator that improves on the worse uniform choice
and moves toward the better as $\pi_j$ rises, not a uniform winner.

\emph{(ii)}~The prior fraction nonetheless does its job at the level it speaks
to, the per-group fiber.
On per-group $\alpha_j$ ESS (Figure~\ref{fig:mixed_pergroup}), $w_j = \pi_j$
sits at or near the best method for every group: it non-centres exactly the
prior-dominated groups whose centred $\alpha_j$ ESS would otherwise lag, and
leaves data-dominated groups alone.
The min-ESS bottleneck is the $\sigma$-funnel, a base-space pathology that no
per-group fiber reparameterisation can touch; the gap between per-group success
and min-ESS is the two-pathologies distinction made concrete.
A practical aside visible in the same data: in the mixed design the data-rich
groups stabilise $\sigma$ and lift $\alpha_j$ ESS for \emph{all} methods, so a
few well-identified groups can make centring tolerable for the sparse ones.
The value of $w_j = \pi_j$ over a single uniform choice is thus robustness
across regimes (a uniform choice is optimal only within its own regime, and
very large $n_j$ or $\sigma \to 0$ would reverse non-centring's advantage, cases
not covered here) together with its role as an analytic per-group diagnostic
from a single pilot run.
This is the typical situation in unbalanced multilevel and panel data, where
groups carry very different numbers of observations; there a per-group choice
can matter, whereas balanced designs are usually well served by a single
uniform parameterisation.

\begin{table}[ht]
\centering
\caption{Controlled comparison: median min-ESS (over all parameters) by
  parameterisation and design. $J = 8$ groups, 10 replicates,
  $4 \times 2000$ post-warmup draws. $w_j = \pi_j$ is the proposed
  prior-fraction adaptive rule. Non-centred wins on min-ESS throughout because
  the bottleneck is the $\sigma$-funnel; $w_j = \pi_j$ interpolates toward it.}
\label{tab:mixed_design}
\small
\begin{tabular}{lrrrrrr}
\toprule
Method & \multicolumn{3}{c}{$\sigma = 1.0$} & \multicolumn{3}{c}{$\sigma = 2.0$} \\
\cmidrule(lr){2-4}\cmidrule(lr){5-7}
 & Sparse & Mixed & Dense & Sparse & Mixed & Dense \\
\midrule
  Centred ($w_j = 0$)        & 187  & 1988 & 1559 & 609  & 1993 & 1395 \\
  Half ($w_j = 0.5$)         & 1071 & 2143 & 1768 & 1457 & 2173 & 1510 \\
  Non-centred ($w_j = 1$)    & 3327 & 2626 & 1858 & 2349 & 2467 & 1628 \\
  $w_j = \pi_j$ (adaptive)   & 2381 & 2119 & 1621 & 1432 & 2082 & 1389 \\
\bottomrule
\end{tabular}
\end{table}

\begin{figure}[ht]
  \centering
  \includegraphics[width=0.95\textwidth]{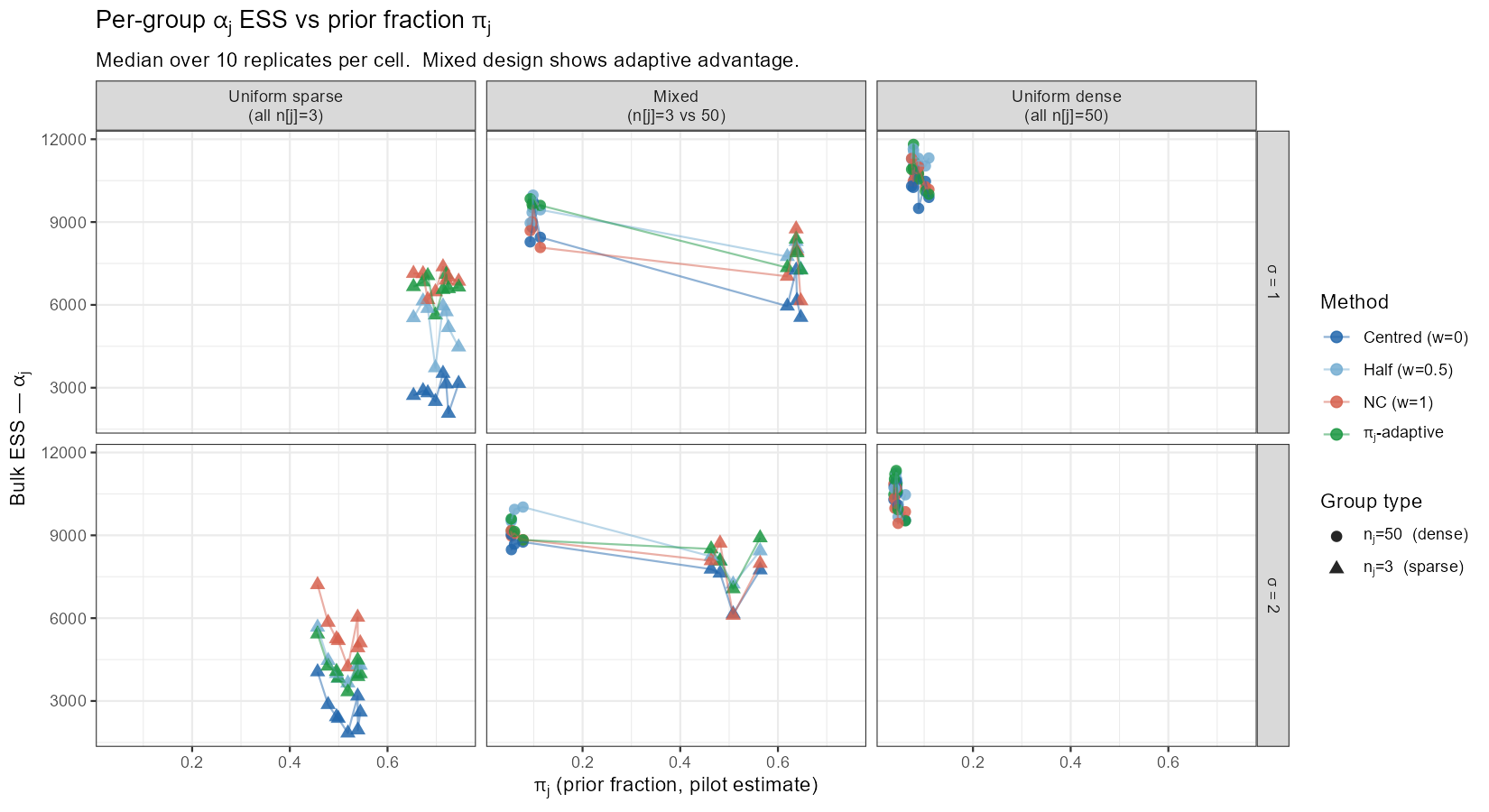}
  \caption{Per-group $\alpha_j$ ESS against the prior fraction $\pi_j$, by
    design (columns) and $\sigma$ (rows), median over 10 replicates.
    The $\pi_j$-adaptive rule (green) tracks the best method across the whole
    $\pi_j$ range: it lifts the $\alpha_j$ ESS of prior-dominated groups (high
    $\pi_j$), where centred (dark blue) lags, without harming data-dominated
    groups (low $\pi_j$).
    This per-group success coexists with non-centred winning on \emph{min}-ESS
    (Table~\ref{tab:mixed_design}), because the min-ESS bottleneck is the
    $\sigma$-funnel, not the fiber.}
  \label{fig:mixed_pergroup}
\end{figure}

\subsection{Interweaving (ASIS)}
\label{sec:guidance_asis}

When no single parameterisation works for all groups, or when the
practitioner prefers not to commit, the ancillarity--sufficiency
interweaving strategy of \citet{yu2011} alternates centred and non-centred
updates within each iteration.
ASIS is provably robust exactly in the regime the diagnostic
flags: where no single parameterisation suits every group simultaneously,
because the prior fractions $\pi_j$ are spread across the unit interval.
By combining the sufficient (centred) and ancillary (non-centred) augmentations
it is efficient whenever at least one is, group by group.
For Stan users, the per-group rule of Section~\ref{sec:guidance_pnc} is
usually simpler to implement; ASIS is the method of choice for Gibbs-style
samplers where both conditionals are tractable.

\subsection{Marginalising the fiber}
\label{sec:guidance_marginal}

The coupling pathology lives in the base--fiber dependence; it can be removed
entirely by integrating the fiber out.
For the logistic GLMM the conditional posterior of each $\alpha_j$ given
$(\mu, \sigma, \bm{\beta})$ is log-concave, so a Laplace or adaptive
quadrature approximation to the marginal likelihood of
$(\mu, \sigma, \bm{\beta})$ is accurate and cheap
\citep[as in INLA;][]{rue2009}.
Sampling the low-dimensional marginal and drawing
$\bm{\alpha} \mid \mu, \sigma, \bm{\beta}, y$ afterwards eliminates the
connection rather than correcting for it, at the cost of an approximation
whose error must be checked.
Where it applies, this is the most direct remedy; the diagnostic identifies
when the extra machinery is warranted.

\subsection{The connection as a fixed linear reparameterisation}
\label{sec:reparam}

The connection can also be turned into a reparameterisation directly.
Evaluating the coefficients $\{A_{j,\mu}, A_{j,\sigma}\}$ at the posterior mean
$(\mu_0, \sigma_0)$ and introducing horizontally-corrected fiber coordinates
\begin{equation}
  \tilde{\alpha}_j = \alpha_j
    - A_{j,\mu}\,(\mu - \mu_0)
    - A_{j,\sigma}\,(\sigma - \sigma_0)
  \label{eq:horiz_reparameterisation}
\end{equation}
removes the \emph{linear} contribution of the base to the fiber, weakening the
base-conditional dependence that $\pi_j$ measures.
This is \emph{not} the exact flat coordinate: by
Proposition~\ref{prop:universal_flat} that is the conditional score
$\partial_{\bm{\alpha}}\ell$, the precision-weighted residual
$(\alpha_j-\mu)/\sigma^2$ for the GLMM, which is nonlinear in the base.
Equation~\eqref{eq:horiz_reparameterisation} is its \emph{linearisation}, with
$A$ frozen at the posterior mean and kept to first order in
$(\mu-\mu_0,\sigma-\sigma_0)$.
It is implemented in Stan as \texttt{glmm\_hconnected.stan} and needs only the
connection coefficients as data.
Table~\ref{tab:ess_comparison} compares effective sample size for the centred,
non-centred, and horizontally-corrected models on the sparse GLMM benchmark
($J=8$, $n_j=3$, $\sigma_\text{true}=3$).

\begin{table}[ht]
  \centering
  \caption{ESS bulk (percentage of nominal 8{,}000) for key parameters under
    three parameterisations on the sparse GLMM.
    H-corrected uses the posterior-mean connection coefficients as in
    \eqref{eq:horiz_reparameterisation}.}
  \label{tab:ess_comparison}
  \small
  \begin{tabular}{lrrr}
    \toprule
    Parameter & Centred & H-corrected & Non-centred \\
    \midrule
    $\mu$ & 39.6\% & 90.6\% & 41.7\% \\
    $\sigma$ & 7.7\% & 12.3\% & 29.3\% \\
    $\alpha[1]$ & 66.5\% & 100.8\% & 100.2\% \\
    $\beta[1]$ & 60.8\% & 70.6\% & 86.4\% \\
    \bottomrule
  \end{tabular}
\end{table}

The H-corrected parameterisation dominates on $\mu$ (90.6\% vs.\ 39.6\% centred
and 41.7\% non-centred) and matches non-centred on $\alpha[1]$: the linear
correction removes the $\mu$--fiber coupling well, even though it is only the
linearisation of the exact trivialisation.
It is weaker on $\sigma$ (12.3\%) than non-centering (29.3\%), but for a reason
distinct from linearisation error: $\sigma$ is the funnel, a \emph{base-space}
pathology, and \emph{no} fiber reparameterisation (the exact score map
included) can touch it.
Since $\sigma$ is the bottleneck here, plain non-centering wins on minimum ESS;
the H-correction's gains are on $\mu$ and the fiber.
We therefore present it as a diagnostic-driven illustration that the connection
is directly actionable, not as a replacement for non-centering, the per-group
rule of Section~\ref{sec:guidance_pnc} is the tool we recommend.
The per-group rule is already used in the \texttt{smoothbp} change-point package
\citep{bindoff2026smoothbp}, where $\pi_j$ guides the parameterisation of each
subject's change-points; a worked example is in the package's reproduction
materials, with a fuller treatment in a companion paper.

\subsection{Connection-corrected proposals (proof of concept)}
\label{sec:guidance_horizontal}

Finally, the connection $A = -\GFF^{-1}\GBF$ is itself directly actionable:
it prescribes how the fiber should co-move when the base moves to remain on
the horizontal slice.
The \texttt{fibr} package provides \texttt{horizontal\_mcmc()}, a block
Metropolis-within-Gibbs sampler in which each base-block proposal
$(\mu, \sigma) \to (\mu', \sigma')$ pre-displaces the fiber by
$\Delta\bm{\alpha} = A(\mu, \sigma, \bm{\alpha})\,
(\Delta\mu,\,\Delta\sigma)^\top$
before the joint accept/reject step.
We include this as a proof of concept that the analytic connection is
actionable at negligible cost, not as a recommended production sampler;
for applied work the methods of
Sections~\ref{sec:guidance_pnc}--\ref{sec:guidance_marginal} are mature,
available in standard tooling, and should be preferred.
Two cautions are warranted.
First, naively applying the connection as a deterministic pre-displacement
without the corresponding Jacobian correction is biased: a simulation-based
calibration check of that variant fails decisively, whereas the
exactly-invertible Laplace transport and a reparameterised HMC in the
Laplace-standardised fiber both pass calibration.
This is consistent with flatness: the correct ``connection-aware'' move is a
reparameterisation, not a transport.
Second, on the sparse benchmark the calibrated variants are correct but do not
outperform non-centred NUTS in effective sample size per gradient evaluation;
they are a demonstration, not a faster sampler.
A full treatment of connection-aware sampling (horizontal leapfrog HMC
with $A$ recomputed at every leapfrog step, ergodicity analysis, and
multi-model benchmarking) is developed in a companion paper.

\subsection{Reading the prior fraction: influence and sensitivity}
\label{sec:guidance_prior}

Beyond the centring decision, $\pi_j$ is an inferential quantity: it is the share
of group $j$'s posterior precision supplied by the prior, so it localises
\emph{where prior choices can move conclusions}.
High-$\pi_j$ groups are prior-dominated and are the ones to interrogate;
low-$\pi_j$ groups are prior-robust by construction.
Two questions follow.
Is $\sigma$ well identified, or is $\pi_j$ inheriting the hyperprior?
Since $\pi_j = 1/(1 + I_j\sigma^2)$, with $I_j = G_{FF,j} - 1/\sigma^2$ the group's
own likelihood information, and $\sigma$ is identified by the \emph{number}
of groups rather than the data per group, a refit under an alternative
$\sigma$-prior shows whether the high-$\pi_j$ groups move (a hyperprior artefact)
or hold (a data statement).
And for those groups, is the population-level prior they are shrunk toward one you
would defend for a group with little data of its own?
A fuller treatment for applied users is left to a separate paper.

\begin{remark}[A frequentist reading]
\label{rem:frequentist}
The shrinkage is not itself Bayesian: the REML/BLUP predictor of $\alpha_j$ shrinks
by the same factor, and the flat-prior limit $\sigma \to \infty$ sends
$\pi_j \to 0$, recovering the per-group maximum-likelihood (no-pooling) estimate.
That limit is also where the sparse group becomes awkward --- with small $I_j$ the
no-pooling estimate has variance $\sim 1/I_j$ and is barely admissible, and such
groups are routinely discarded.
A proper hierarchical prior retains them with a coherent posterior, and $\pi_j$
states how much of the estimate rests on that prior rather than on the likelihood,
making the evidence explicit instead of excluding the case.
\end{remark}

\section{The \texttt{fibr} R Package}
\label{sec:package}

The \texttt{fibr} package \citep{bindoff2026fibr} centres on the prior fraction,
the quantity this paper concludes is the operative one; it is the maintained,
user-facing tool and is distributed on CRAN.
The geometric and sampler methods used to produce the figures (the analytic
connection, the loop-conditional dependence diagnostic, the transport integrator,
and the connection-corrected sampler) accompany the paper as reproduction code
under \texttt{paper/} in the source repository: with the connection flat for this
model class, they are demonstration apparatus rather than routine diagnostics, and
they load by sourcing \texttt{paper/setup.R}.
Table~\ref{tab:package} lists the package's exported functions.

\begin{table}[ht]
  \centering
  \caption{Exported functions in the \texttt{fibr} R package. The connection,
    dependence, and sampler methods of
    Sections~\ref{sec:diagnostic}--\ref{sec:guidance}
    (\texttt{compute\_connection()}, \texttt{holonomy\_diagnostic()},
    \texttt{integrate\_transport()}, \texttt{synthetic\_holonomy\_loop()},
    \texttt{horizontal\_mcmc()}) are reproduction code in the repository, not
    package exports.}
  \label{tab:package}
  \small
  \begin{tabular}{lp{0.65\textwidth}}
    \toprule
    Function & Description \\
    \midrule
    \texttt{prior\_fraction()} &
      Read-only per-coordinate prior fraction $\pi_j$ (shrinkage / pooling
      factor) for a fitted hierarchical model, with an adapter for
      \texttt{brms} fits and a manual path for other Stan models. Reports which
      group-level coordinates are prior-dominated, without reparameterising or
      refitting. \\
    \texttt{smoothbp\_advisor()} &
      Fisher information decomposition for changepoint random effects in
      \texttt{smoothbp} fits; returns per-group prior fractions and
      parameterisation recommendations. \\
    \bottomrule
  \end{tabular}
\end{table}

The package itself depends only on \texttt{posterior} \citep{posterior2022} for
chain handling and \texttt{ggplot2} for visualisation; the reproduction code under
\texttt{paper/} additionally uses \texttt{FNN} for $k$-nearest-neighbour search,
\texttt{Matrix}, \texttt{deSolve}, and a Stan backend
(\texttt{cmdstanr}, \citealp{cmdstanr}).
While the connection and dependence machinery target the two-level logistic
GLMM, the prior fraction is model-agnostic: \texttt{prior\_fraction()} computes
it per coordinate for any fitted hierarchical model, with an adapter for
\texttt{brms} fits, as a read-only report of which group-level estimates are
prior-dominated (mostly shrinkage) rather than data-driven.
Because the per-coordinate computation needs only the fitted variance components
and the per-observation working weights, it extends without special-casing to
nested, crossed, and multi-level designs and to non-Gaussian families. This is the
regime in which the intraclass correlation requires case-specific conventions (a
latent-scale residual for generalised models, a covariate value for random slopes,
separate adjusted and unadjusted forms at each level), and the package's tests
verify the per-coordinate values against an independent recomputation for nested,
crossed, and three-level Gaussian and Bernoulli fits.
Because it neither reparameterises nor refits, it carries none of the
calibration burden of a reparameterisation and serves as a lightweight
prior-influence diagnostic alongside $\hat{R}$ and effective sample size.
The package is available at \url{https://github.com/ABindoff/fibr}.

\section{Discussion}
\label{sec:discussion}

\subsection{Coupling vs.\ the funnel: two distinct pathologies}

The simulation study (Section~\ref{sec:simulation}) reveals two related but
distinct pathologies, neither of which is geometric holonomy of $\Aconn$.
The funnel is a base-space pathology, and it \emph{is} genuinely geometric: it
reflects the intrinsic curvature of the base, whose Fisher--Rao metric is
hyperbolic (connection (i) of Section~\ref{sec:which_connection}).
When $\sigma$ is large, the posterior of $\sigma$ is concentrated at small
values (shrinkage) and the conditional distribution $p(\bm{\alpha}|\mu,\sigma)$
is narrow, creating a highly non-isotropic target.
HMC struggles because leapfrog steps that work in the wide region overshoot
in the neck.
The flatness of $\Aconn$ says nothing about this curvature and does not remove
it.

The second is a base--fiber \emph{coupling} pathology: when $\sigma$ is
\emph{small} (tight prior, high prior precision), the connection
$A_{j,\mu} = 1/(\sigma^2 G_{FF,j})$ is large, so the conditional posterior of
$\alpha_j$ tracks $\mu$ tightly and the centred chain explores it slowly.
It is tempting to read this as holonomy, a sampler that does not know the
connection cannot make horizontally-corrected proposals, but
Proposition~\ref{prop:flat} shows the connection is flat, so there is no
holonomy to follow, and the key empirical finding
(Section~\ref{sec:attribution}) confirms it: NUTS does not \emph{transport}
the fiber, and chain displacements
$\alpha_\text{end} - \alpha_\text{start}$ do not track any
parallel-transport prediction.
What remains is purely statistical: \emph{excess conditional autocorrelation}.
Prior-dominated groups ($\pi_j$ large) retain loop-conditional fiber
dependence that likelihood-dominated groups do not, and a negative-control
experiment confirms this signal is tied to base-closure conditioning rather
than to lag autocorrelation alone.
The two pathologies can co-occur (sparse data and a wide prior create both a
funnel and, at intermediate $\sigma$, strong coupling), but they are
separable: the funnel is a base-space curvature problem, while the coupling is
captured entirely by the per-group prior fraction $\pi_j$.

\subsection{The connection is flat; curvature lives in the working metric}
\label{sec:working_metric}

Proposition~\ref{prop:universal_flat} settles a question an earlier draft of
this paper left open.
The metric-orthogonal connection is flat for \emph{any} smooth hierarchical
posterior, of any fiber dimension, with any prior --- not because of conjugacy
but because the fiber score is a first integral.
In particular, correlated random effects, spatial or temporal fiber dependence,
and vector-valued group effects with a non-diagonal prior covariance all give
\emph{flat} connections; they do \emph{not}, contrary to what one might expect,
produce rotational holonomy of the true connection.
The only ways the metric-orthogonal connection fails to be flat are (i) where
$G_{FF}$ is singular (the fiber-degeneracy regime discussed below, where the
foliation by score level sets breaks down and obstruction can genuinely
re-enter) and (ii) when the connection is built from a metric other than the
true Hessian.
Case (ii) is not a defect but an opportunity: it is where genuine curvature
lives.

Write $\Aconn_M = -M_{FF}^{-1}\GBF$ for the distribution orthogonal with respect
to a \emph{working} metric $M$: a fixed mass matrix, an empirical covariance,
or the prior-only metric.
The score argument no longer applies, since $M_{FF} \neq -\partial^2_{\bm\alpha}\ell$,
and $\Aconn_M$ is generically curved.
We verify this in a two-fiber example: a bivariate group effect $\alpha\in\R^2$
with base-dependent prior correlation
$\Lambda(\bm\theta) = R(\theta_2)\diag(e^{\theta_1},e^{-\theta_1})R(\theta_2)^\top$
and a logistic likelihood (so the posterior is non-Gaussian and $G_{FF}$ is
fiber-dependent).
The metric blocks are computed in closed form and validated against automatic
differentiation to machine precision ($\sim 10^{-16}$), and the holonomy
operator $H_M$ is obtained by integrating parallel transport with an adaptive
high-order solver (\texttt{DOP853}, tolerance $10^{-12}$); the true Fisher
connection returns to the identity at the integrator floor
($\norm{H_M-I}_F \sim 10^{-13}$) at every loop size, confirming
Proposition~\ref{prop:universal_flat} numerically.
Taking $M_{FF}$ to be the identity instead yields a holonomy operator with
\emph{complex} eigenvalues $0.986 \pm 0.167\,i$; an anisotropic fixed metric
$\diag(1,3)$ gives $0.998 \pm 0.071\,i$.
Crucially, $\norm{H_M - I}_F$ grows as the
\emph{first} power of the enclosed base area (fitted log-log slopes $1.08$ and
$0.99$, Figure~\ref{fig:working_metric}), the hallmark of genuine curvature
(leading-order holonomy $=$ curvature $\times$ area) rather than numerical
noise.
This is exactly the rotational holonomy that the true connection never produces
(Remark~\ref{rem:abelian}), and for which the full-matrix estimator
(\texttt{structure = "full"}) and the complex eigenspectrum become the right
tools; its magnitude measures how far the working metric is from horizontal.
(Reproduced with exact derivatives in \texttt{data-raw/ad\_holonomy.py}.)

\begin{figure}[ht]
  \centering
  \includegraphics[width=0.95\textwidth]{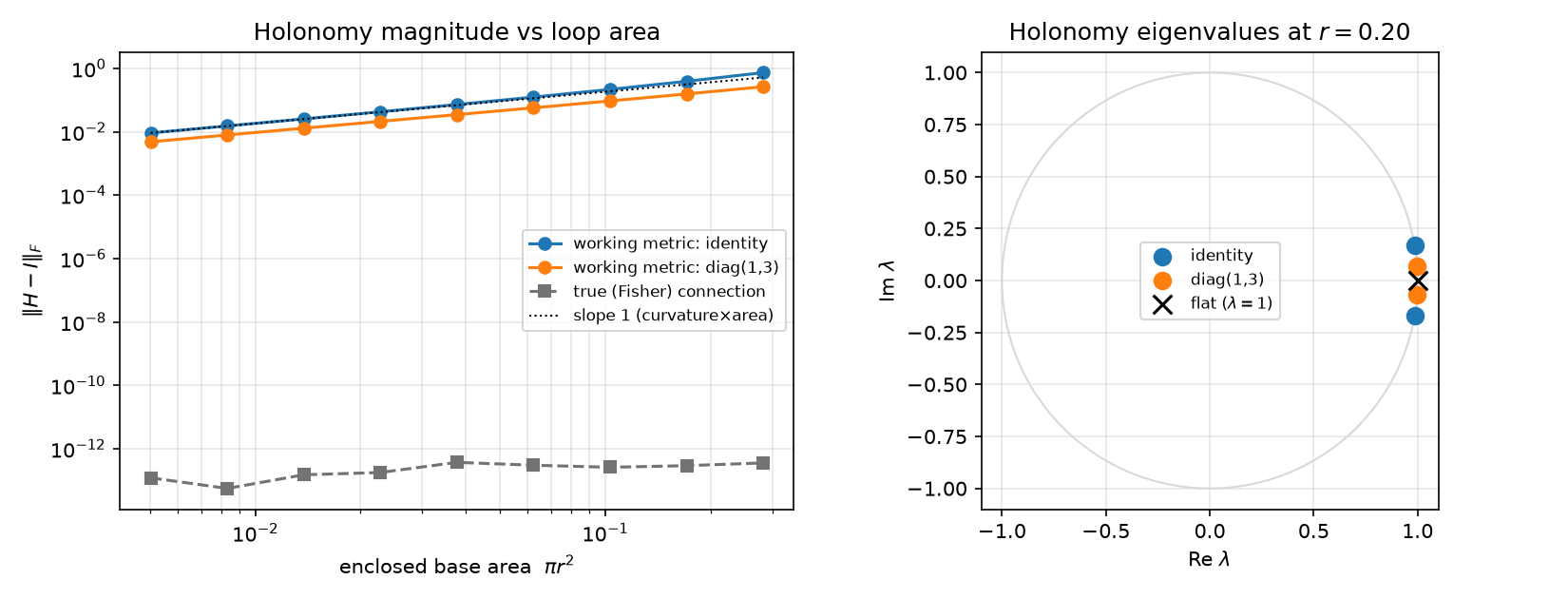}
  \caption{Working-metric connections are genuinely curved.
    \emph{Left}: Frobenius holonomy $\norm{H_M-I}_F$ against enclosed base area
    $\pi r^2$, log-log.
    The true (Fisher) connection sits at the integrator floor
    ($\sim 10^{-13}$, flat, Proposition~\ref{prop:universal_flat}); the
    identity and $\diag(1,3)$ working metrics lie on slope-1 lines, so their
    holonomy scales linearly with area --- the signature of a genuine
    curvature, not noise.
    \emph{Right}: holonomy eigenvalues at $r=0.20$; the working-metric spectra
    sit off the real axis (rotation), while the true connection gives
    $\lambda = 1$.
    Derivatives are validated against automatic differentiation to
    $\sim 10^{-16}$; transport uses \texttt{DOP853} at tolerance $10^{-12}$
    (\texttt{data-raw/ad\_holonomy.py}).}
  \label{fig:working_metric}
\end{figure}

This matters because every practical sampler uses a \emph{fixed} metric.
NUTS adapts a mass matrix once during warmup and then holds it; a blocked
sampler uses whatever conditional scale it is handed.
Such a sampler does not transport along the flat true connection but along the
curved $\Aconn_M$ of its own working metric, so it can experience a genuine
\emph{algorithmic} holonomy (a systematic, orientation- and area-dependent
fiber drift after the hyperparameters traverse a loop) even though the
underlying geometry is flat.
This reframes the empirical signal of Section~\ref{sec:diagnostic}: the
loop-conditional dependence $\hat\rho_j$ is an algorithmic, not geometric,
footprint, and the attribution test (Section~\ref{sec:attribution}) asks
precisely whether it carries the orientation and area structure that holonomy
would imply.
For NUTS on the GLMM it does not, which is consistent with NUTS adapting a
metric close enough to horizontal that $\Aconn_M$ has little curvature in the
explored region.
A deliberately mis-scaled sampler should instead show algorithmic holonomy
tracking the curvature of its working metric --- a concrete, testable
prediction.
This is the subject of a companion paper: a horizontal leapfrog HMC sampler that
recomputes $A$ at every leapfrog step so that trajectories follow horizontal
lifts, keeping the working metric aligned with $-\partial^2_{\bm\alpha}\ell$ so
that $\Aconn_M$ stays flat.
That paper develops the full treatment: ergodicity arguments, the
Metropolis--Hastings correction, a compiled backend, and multi-model benchmarks
against NUTS on coupling-dominated targets.

This closes the loop with the hypothesis we set out to test. For the geometric
program in MCMC, the Riemannian and connection-aware samplers that would naturally
pursue a curvature account of centring, the flat result is a redirection rather
than a dead end: there is no base--fiber holonomy to transport along in a
hierarchical posterior, so the effort is better spent keeping a sampler's working
metric aligned with the true Hessian, which is where the exploitable curvature
lives.

\subsection{Limitations and future work}

\paragraph{Model-specific derivations.}
The flatness itself is not model-specific (Proposition~\ref{prop:universal_flat});
only the \emph{closed-form} connection, curvature, and prior fraction of
Section~\ref{sec:glmm_connection} are particular to the logistic GLMM.
For other models the same quantities follow from $G_{FF}$ and $G_{BF}$, which
can be obtained by reverse-mode automatic differentiation of the log-posterior
gradient; supplying them is the natural next step for a compiled backend, and
the prior fraction $\pi_j = (\partial^2_{\alpha_j}\text{prior})/G_{FF,j}$
generalises directly.

\paragraph{Simulation study.}
The simulation study uses a single model family and does not include a
power analysis for the dependence diagnostic.
Characterising the minimum detectable signal as a function of chain length
and number of groups is an important practical question left for future work.

\paragraph{Fiber degeneracy.}
A second failure mode, distinct from residualisation leakage, arises when
the fiber itself has collapsed under the funnel pathology.
The diagnostic estimates per-group dependence coefficients from residual
fiber variation around detected loops; if the fiber has contracted to near a
point, as happens when an unidentified group-level parameter is pinned close
to its prior mean under a tight prior and near-absent likelihood, there is
no variance to autocorrelate and $\hat{\rho}_j \to 0$ regardless of pathology
severity.
The diagnostic is therefore blindest in the most severe cases: a collapsed
fiber and a well-mixed fiber are observationally equivalent in $\hat{\rho}_j$.
This was observed in a spike-and-slab changepoint model where some subjects'
second changepoint parameter was completely unidentified; those subjects
produced $\hat{\rho}_j \approx 0$ despite sitting in a Neal funnel, while the
analytic prior fraction $\pi_j$ correctly flagged the group.
The two failure modes --- residualisation leakage
(Section~\ref{sec:attribution}), which inflates $\hat{\rho}_j$ at short gaps,
and fiber degeneracy, which suppresses it in collapsed groups --- act in
opposite directions and should both be considered when interpreting a low
$\hat{\rho}_j$.
The analytic $\pi_j$, which depends only on the model structure and pilot
draws rather than on chain variation, is the more robust per-group flag and
should be treated as the primary output.

\paragraph{Prior dependence of $\pi_j$.}
We compute $\pi_j$ at a point estimate of $\sigma$, so it inherits the
$\sigma$-hyperprior when $\sigma$ is weakly identified (few groups). Reporting
$\pi_j$ per posterior draw, its distribution rather than a plug-in value,
propagates this uncertainty, and a wide posterior for $\pi_j$ is itself the signal
that a group's reading is hyperprior-sensitive. This refinement is natural future
work.

\paragraph{Variational inference.}
The same geometric structure applies to variational families: the fibration of
the space of variational parameters over the mean-field components has a
connection whose curvature determines the holonomy of the variational update.
This connection to variational Bayes geometry is speculative but worth pursuing.

\subsection{Note on methods of discovery}

This work was developed through an extended, iterative collaboration with Claude
Sonnet (Anthropic).
The fiber bundle framing, the mathematical derivations, the R package
architecture, and the exposition were developed jointly through dialogue rather
than by the author alone.
The mathematical results were checked both analytically and numerically (the
package test suite and \texttt{data-raw/verify\_flat\_connection.R}); the
flatness result in particular emerged from a critical re-examination of an
earlier draft that had reported only the linearised curvature.
The scientific framing, experimental decisions, and responsibility for errors
are the author's.
The author views this as an example of AI-collaborative mathematical research,
distinct from using AI as a writing assistant or code generator, and
reports it explicitly rather than subsuming it into generic tool-use language,
in the hope that transparent description contributes to evolving community
norms around human--AI co-production of scientific knowledge.

\section*{Acknowledgements}

Computations were performed using Stan \citep{stan2017} via cmdstanr
\citep{cmdstanr}, and the posterior package \citep{posterior2022}.

\bibliographystyle{plainnat}
\bibliography{fibr_paper}
\end{document}